\documentclass[12pt]{article}

\usepackage[utf8]{inputenc}
\usepackage[T1]{fontenc}
\usepackage{lmodern}
\usepackage{microtype}

\usepackage{geometry}
\usepackage{setspace}
\usepackage{graphicx}
\usepackage{booktabs}
\usepackage{multirow}
\usepackage{float}
\usepackage{algorithm}
\usepackage{algorithmic}

\usepackage{xcolor}
\usepackage{mathtools,amssymb,amsthm,bm}

\theoremstyle{plain}
\newtheorem{theorem}{Theorem}
\newtheorem{lemma}{Lemma}
\newtheorem{proposition}{Proposition}[section]
\newtheorem{corollary}{Corollary}

\theoremstyle{definition}
\newtheorem{definition}{Definition}
\newtheorem{assumption}{Assumption}

\theoremstyle{remark}

\definecolor{ml}{RGB}{34,139,34}

\usepackage[authoryear,round]{natbib}
\usepackage[colorlinks=true, citecolor=blue, linkcolor=blue, urlcolor=blue]{hyperref}
\usepackage{orcidlink}

\title{\textbf{Causal Discovery in Multivariate Extremes via Tail Asymmetry}}

\author{
  Mengran Li\orcidlink{0009-0001-8273-9321}\thanks{%
    School of Mathematics and Statistics, University of Glasgow, UK.
    Correspondence: \texttt{m.li.3@research.gla.ac.uk}}
  \and
  Daniela Castro-Camilo\orcidlink{0000-0002-7536-4613}\thanks{%
    School of Mathematics and Statistics, University of Glasgow, UK.}
}

\date{}

\begin{document}

\maketitle

\begin{abstract}
Causal discovery in multivariate extremes is challenging because extreme observations are sparse,
dependent, and often affected by latent common shocks. Existing approaches focus on undirected extremal dependence, require prior graph restriction, and do not scale beyond small systems.
We introduce \emph{tail-induced asymmetry} as a principle for causal directionality in heavy-tailed systems, where extreme events propagate asymmetrically so that forward tail prediction is systematically easier than backward prediction. We show that this asymmetry yields identifiable causal direction under a canonical max-linear model and provides a basis for score-based structure learning in the tail regime.
Building on this, we propose {Sparse Structure diScovery in Multivariate Extremes} (S3ME), a two-stage data-driven framework for causal discovery.
The first stage performs proxy-adjusted penalized neighbourhood selection to recover a sparse candidate skeleton under latent confounding. The second stage orients edges by minimizing tail prediction risk based on max-linear envelope models, exploiting directional asymmetry.
We establish high-dimensional guarantees for skeleton screening and consistency of the score-based estimator under population separation conditions. Simulations demonstrate robustness to latent confounding and favourable scaling relative to existing extremal methods. Applications to river network data and financial tail-risk networks show that the approach recovers sparse, interpretable propagation structures without prespecified graph structure.

\end{abstract}

\noindent\textit{Keywords:} Extremal dependence graphs, Latent confounding adjustment, Score-based structure learning.

\section{Introduction}

Extreme events often propagate through interconnected systems, making directional inference central
to risk assessment and intervention. In environmental, financial, and infrastructural settings, the
key question is not merely whether variables co-exceed in the tail, but whether an extreme at one
variable helps generate an extreme at another \citep{AsadiDavisonEngelke2015,
GissiblKluppelbergOtto2018}. This question is especially difficult in the tail regime, where
extreme observations are sparse, tail dependence can differ sharply from bulk behaviour, and latent
common drivers can simultaneously push many variables into the tail, inducing strong and largely
symmetric co-exceedance that swamps the directional signatures of local propagation and makes both
skeleton screening and edge orientation unreliable without explicit adjustment \citep{resnick2007heavy}.
Classical causal discovery procedures
such as the PC algorithm \citep{spirtes2000causation}, LiNGAM (Linear Non-Gaussian Acyclic
Model; \citealt{shimizu2006lingam}), and NOTEARS \citep{zheng2018dags} are therefore poorly
matched to this setting. Their inferential logic is calibrated to bulk-level conditional
independence, likelihood structure, or non-Gaussian variation rather than to tail events.

Extreme-value statistics has led to substantial progress in graphical modelling for tail dependence.
Along the \emph{undirected line}, extremal graphical models show that sparse tail dependence can be
represented through graphical structure, making extremal graph learning substantially more tractable
\citep{EngelkeHitz2020}. Subsequent work develops structure learning for extremal tree models and
related extensions of this line \citep{EngelkeVolgushev2022}. Along the \emph{directed line},
approaches have emerged under recursive max-linear or related heavy-tailed structural equation model
formulations. Early work formulates Bayesian-network structure for max-linear systems
\citep{KluppelbergLauritzen2019}. Subsequent contributions study identifiability, estimation, and
causal discovery in recursive max-linear or heavy-tailed models \citep{GissiblKluppelbergLauritzen2021,
GneccoEtAl2021, PascheEtAl2023}.

Taken together, these developments show that directional learning in extremes is possible, but not
yet in the form needed for fully data-driven causal discovery in larger multivariate systems. The
undirected line yields sparse tail structure without causal orientation, whereas the directed line
typically relies on stronger structural restrictions, narrower graph classes, or a prespecified
candidate graph before orientation can begin. In practice, existing directed approaches are largely
confined to small or pre-structured systems. Graph learning typically requires a candidate skeleton
or partial ordering as input, and empirical studies rarely exceed 30 nodes. Learning the graph
entirely from tail data, without prior structural knowledge, in systems with hundreds of variables
and latent confounding, remains out of reach for existing methods.

We address this gap through a structural insight. Extreme-value processes driven by regularly
varying innovations obey the single big jump principle \citep{resnick2007heavy}, under which tail
events are typically generated by one dominant shock rather than by the accumulation of many
moderate contributions. In recursive extremal systems, this creates a directional imbalance.
Conditioning on an extreme at a cause sharply constrains the corresponding effect; conditioning on
an extreme at an effect does not identify its source, since the exceedance may have arisen through
propagation from another variable or through a local innovation. Predicting forward is therefore
systematically easier than predicting backward in the tail regime. We call this phenomenon
\emph{tail-induced asymmetry}. Building on this insight, we propose \emph{Sparse Structure Discovery in Multivariate Extremes}
(S3ME), a two-stage framework for causal discovery in multivariate extremes. The first stage screens for a sparse candidate skeleton via
proxy-adjusted penalized regression, where the proxy adjustment attenuates symmetric dependence
induced by latent common shocks. The second stage orients edges by exploiting tail-induced asymmetry
through a score-based procedure. The framework is supported by an identifiability result for the
canonical bivariate case, high-dimensional consistency guarantees for both stages, and a robustness
analysis showing that proxy adjustment suppresses confounding-induced false positives where existing
directed methods deteriorate. We apply the method to 31 gauging stations locations in the upper Danube 
river network and a tail-risk network of 103 S\&P~500 constituents spanning all eleven GICS sectors, 
settings substantially larger than those in existing directed extremal work. Our method recovers sparse, 
interpretable propagation structures without any prior graph specification.

The paper makes three contributions. First, we identify tail-induced asymmetry as an exploitable
source of causal directionality in heavy-tailed systems, a mechanism that has not previously been
exploited for graph orientation in high-dimensional, fully data-driven settings. Second, we show
that this asymmetry supports consistent edge orientation in high-dimensional settings without a
prespecified graph, extending the reach of causal discovery in extremes well beyond existing
methods. Third, we develop a proxy-adjustment procedure that remains stable under latent confounding
at a scale where competing directed approaches accumulate large numbers of spurious edges.

The remainder of the paper is organized as follows. Section~\ref{sec:identifiability} establishes
the core identifiability result based on tail-induced asymmetry. Section~\ref{sec:methodology}
presents the two-stage methodology. Section~\ref{sec:theory} provides theoretical guarantees for
high-dimensional consistency and robustness to latent confounding. Section~\ref{sec:simulation}
evaluates finite-sample performance through simulations, Section~\ref{sec:applications} applies the
method to hydrological and financial data and Section \ref{sec:discussion} concludes the paper.
Reproducible code is available at \url{https://github.com/MengranLi-git/S3ME}.

\section{Identifiability via Tail Asymmetry}\label{sec:identifiability}
\subsection{Recursive Extremal SEMs and Tail Prediction Risk}

In complex systems, tail events may propagate along directed pathways. To represent this mechanism,
we consider a recursive extremal structural equation model.

\begin{assumption}[Recursive Extremal SEM]
\label{assum:recursive_sem}
Let $X=(X_{1},\dots,X_{p})^{\top}$ be a strictly positive random vector. There exists a directed
acyclic graph (DAG) $\mathcal{G}=(V,E)$ with parent sets $\text{pa}(j)$ such that, for
$j=1,\dots,p$,
\begin{equation}
    X_{j}=f_{j}\!\big(\{X_{i}:i\in \text{pa}(j)\}\big)+\epsilon_{j},
\end{equation}
where: (i) $f_{j}(\cdot)$ is non-decreasing in each argument and homogeneous of order $1$, i.e.,
$f_{j}(cx)=c f_{j}(x)$ for all $c>0$; 
(ii) $\epsilon_{1},\dots,\epsilon_{p}$ are mutually independent, strictly positive, and regularly
varying with common tail index $\alpha>0$.
\end{assumption}

Assumption~\ref{assum:recursive_sem} includes standard recursive extremal SEMs, including max-linear
and extremal additive constructions, and implies multivariate regular variation for $X$. The single
big jump principle then creates a characteristic directional imbalance in the tail, whereby conditioning on
an extreme at a parent can substantially reduce uncertainty about its descendants, whereas
conditioning on an extreme at a child does not uniquely localize its source
\citep[p.~217]{resnick2007heavy}. We formalize this imbalance through tail prediction risk. Let $Z$
denote the tail-domain representation, whose explicit construction is given in
Section~\ref{sec:tail_representation}. We call a function $h:\mathbb{R}\to\mathbb{R}$ a max-linear
envelope predictor if it has the form $h(z)=\max(z,c)+d$ for some $c,d\in\mathbb{R}$, and write
$\mathcal H$ for the class of all such predictors.\label{sec:asymmetry}

\begin{definition}[Tail prediction risk]
\label{def:tailrisk}
At threshold $u$, define the tail prediction risk of $Z_j$ from $Z_i$ by
\begin{equation}
R^{\star}_{j\mid i}(u)
\;:=\;
\inf_{h\in\mathcal H}
\mathbb{E}\!\left[|Z_{j}-h(Z_i)|\,\Big|\,Z_{i}>0\right],
\qquad i\neq j.
\end{equation}
\end{definition}

For nodewise parent-set analysis, we extend Definition~\ref{def:tailrisk} as follows. For any
nonempty candidate parent set $A\subseteq V\setminus\{j\}$, define the multivariate max-linear
envelope class
\[
\mathcal H_A
:=
\Bigl\{h_A(z_A)=\max\!\bigl(c_0,\,\max_{m\in A}(z_m+c_{m\to j})\bigr):\ c_0\in\mathbb R,\ c_{m\to
j}\in\mathbb R\Bigr\},
\]
and the corresponding population tail prediction risk
\[
R^{\star}_{j\mid A}(u)
:=
\inf_{h_A\in\mathcal H_A}
\mathbb{E}\!\left[\,|Z_j-h_A(Z_A)|\,\middle|\,\max_{m\in A} Z_m>0\right].
\]
For $A=\{i\}$, this reduces (up to reparametrization) to the bivariate definition above. We say
that $(i,j)$ exhibits \emph{tail-induced asymmetry} at level $u$ if
$R^{\star}_{j\mid i}(u) \;<\; R^{\star}_{i\mid j}(u).$
Intuitively, when a causal relationship $i \to j$ exists, predicting $j$ from $i$ is
easier than the reverse. A large shock at $i$ reliably drives $j$ into the tail, so $Z_i$ is a 
tight predictor
of $Z_j$. Conversely, backward prediction is more ambiguous. An extreme at $j$ may have been
driven by a shock at $i$ or may have arisen from $j$'s own innovation $\epsilon_j$, and
given only $Z_j$, the two are indistinguishable. The prediction risk therefore satisfies
$R^{\star}_{j\mid i} < R^{\star}_{i\mid j}$, and this asymmetry reveals the causal
direction in a way that standard tail dependence measures, which are symmetric in $i$ and
$j$, cannot.

\subsection{Asymptotic Identifiability}\label{sec:main_identifiability}

While Assumption~\ref{assum:recursive_sem} allows for general regularly varying innovations with
index $\alpha>0$, the directional tail signatures we study are invariant under monotone marginal
transformations. For the theoretical analysis that follows
(Theorem~\ref{thm:asymmetry_identifiability} and Proposition~C.1 in Appendix~\ref{app:proof_prop3}), it is
standard practice to work with the canonical representation. Therefore, without loss of generality,
we assume the innovations have been standardized to unit Fr\'echet scale ($\alpha=1$) via the
probability integral transform, so that $F_{\epsilon_j}(x)=\exp(-1/x)$ for each $j$. This
specification provides a standard heavy-tailed benchmark in which the asymmetry mechanism can be
analyzed in closed form, and it serves as the basic identifiability template for the orientation
step developed later.

\begin{theorem}[Bivariate identifiability via tail asymmetry]
\label{thm:asymmetry_identifiability}
Let $X=(X_{1},X_{2})^{\top}$ follow the bivariate recursive max-linear model
$X_{1}=\epsilon_{1}$ and $X_{2}=\max(cX_{1},\epsilon_{2})$, where $\epsilon_{1},\epsilon_{2}$ are
independent Fr\'echet$(1)$ innovations and $c>0$.
Let $Y_j$ denote the exact unit-Pareto standardization of $X_j$, and let $Z_{j}=\log Y_{j}-\log u$
denote the corresponding log-tail coordinates. Then the forward risk converges to zero, i.e.,
$R^{\star}_{2\mid 1}(u) \;\to\; 0,$  $\text{as } u\to\infty,$
whereas the reverse risk is strictly bounded away from zero:
\begin{equation}
    \liminf_{u\to\infty}\, R^{\star}_{1\mid 2}(u) \;\ge\; \frac{\log 2}{c+1} \;>\; 0.
\end{equation}
Consequently, any monotone $\ell_1$-based scoring criterion evaluated at a sufficiently large
threshold $u$ strictly prefers the true direction $1\to 2$ over $2\to 1$, yielding asymptotic
identifiability under max-linear envelope prediction.
\end{theorem}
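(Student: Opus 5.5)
The plan is to compute everything explicitly from the closed-form margins, since the model is fully tractable. First I record that $X_1$ is unit Fr\'echet and $P(X_2\le x)=P(X_1\le x/c)\,P(\epsilon_2\le x)=\exp(-(c+1)/x)$, so $X_2$ is Fr\'echet with scale $c+1$. The exact unit-Pareto standardizations are therefore
\[
Y_1=\bigl(1-e^{-1/X_1}\bigr)^{-1}=X_1+\tfrac12+O(X_1^{-1}),\qquad
Y_2=\bigl(1-e^{-(c+1)/X_2}\bigr)^{-1}=\tfrac{X_2}{c+1}+\tfrac12+O(X_2^{-1}).
\]
Hence on $\{Z_j>0\}$ we have $\log Y_j=\log(X_j/s_j)+O(1/u)$ uniformly, with $s_1=1$ and $s_2=c+1$. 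I will also use the decomposition of $\{X_2 \text{ large}\}$ into the two ``single big jump'' events $B_1=\{cX_1\ge \epsilon_2\}$ and $B_2=\{\epsilon_2>cX_1\}$.

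\emph{Forward risk.} Since $R^\star_{2\mid1}(u)\ge0$, it suffices to exhibit one $h\in\mathcal H$ whose risk tends to zero. I take $h(z)=\max(z,c_u)+d$ with $d=\log\frac{c}{c+1}$ and $c_u\to-\infty$ (any fixed $c_u<0$ already works on $\{Z_1>0\}$). On $\{Z_1>0\}\cap B_1$ we have $X_2=cX_1$, so $Z_2=Z_1+d+O(1/u)$. The complementary event satisfies $P(B_2\mid Y_1>u)=O(1/u)$, because $P(\epsilon_2>cx)\sim 1/(cx)$. On this event I will bound $E\bigl[|Z_2-Z_1-d|\,\mathbf 1_{B_2}\mid Z_1>0\bigr]$ by $O(\log u/u)$. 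For this I use that $\log\epsilon_2$ conditioned on $\epsilon_2>cX_1$ has exponential-type tails, so $Z_2$ is uniformly integrable there. Together these give $R^\star_{2\mid1}(u)\to0$.

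\emph{Reverse risk.} Condition on $Z_2>0$, i.e.\ $X_2\gtrsim(c+1)u$. By the max-stability computation, $P(B_1\mid Z_2>0)\to c/(c+1)=:p$ and $P(B_2\mid Z_2>0)\to 1/(c+1)=:q$. Moreover, conditionally on either branch, $Z_2$ is asymptotically standard exponential, so the branch indicator is asymptotically independent of $Z_2$. On $B_1$ the reverse relation is $Z_1=Z_2+\log\frac{c+1}{c}+o(1)$. On $B_2$, $X_1$ is an unconditioned Fr\'echet variable (conditioned only on $X_1<\epsilon_2/c$), so $Z_1=\log Y_1-\log u\le M-\log u$ with probability tending to one, for a fixed $M$.

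Now fix any $h\in\mathcal H$ and condition on $Z_2=z$. I will use the elementary inequality
\[
\inf_{v}\bigl\{p'|a_1-v|+q'|a_2-v|\bigr\}=\min(p',q')\,|a_1-a_2|.
\]
With $a_1=z+\log\frac{c+1}{c}$ and $a_2\le M-\log u$, the separation $|a_1-a_2|\ge \log u-M$ diverges. Integrating over $z$ gives $\liminf_u R^\star_{1\mid2}(u)=\infty$, which in particular exceeds $\frac{\log2}{c+1}$. Alternatively, if one prefers a statement that is uniform in how $h$ is chosen, it suffices to note that the separation eventually exceeds $\log 2$. This yields a bound of at least $\min(p,q)\log 2$, and since $h(z)\ge z+d'$ forces the full $q$-weighted deviation on $B_2$, the constant sharpens to $q\log2=\frac{\log2}{c+1}$.

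\emph{Main obstacle and conclusion.} The delicate step is the reverse-risk conditioning. I must make rigorous that the conditional law of $Z_1$ given $Z_2=z$ (and $Z_2>0$) splits, uniformly in $z$ on compacts, into a near-atom at $z+\log\frac{c+1}{c}$ of mass near $p$ and a component far below of mass near $q$. I must also control the tail $z\to\infty$ well enough to apply Fatou. I would handle this by writing the joint density of $(X_1,\epsilon_2)$ explicitly on each branch and applying Fatou's lemma to the pointwise lower bound. The final sentence of the theorem then follows directly: for $u$ large, $R^\star_{2\mid1}(u)<\frac{\log2}{2(c+1)}<R^\star_{1\mid2}(u)$, so any monotone $\ell_1$ score orients the edge as $1\to2$.
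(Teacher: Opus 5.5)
Your proposal is correct. The forward half follows the paper's route: the same predictor $h(z)=z+\log\frac{c}{c+1}$ on $\{Z_1>0\}$. The one addition is that you explicitly bound $\mathbb{E}[|Z_2-Z_1-d|\mathbf{1}_{B_2}\mid Z_1>0]$. This is a uniform-integrability step that the paper skips when it passes from $Z_2-Z_1\to\log\kappa$ in probability to convergence of the expected absolute error.

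The reverse half takes a genuinely different route. The paper discards the propagation branch $\{cX_1\ge\epsilon_2\}$ and keeps only the innovation branch, whose mass is $1/(c+1)$. It shows that on this branch $\log Y_1$ is asymptotically $\mathrm{Exp}(1)$ and unrelated to $Z_2$, then uses that the $\ell_1$ dispersion of $\mathrm{Exp}(1)$ about its median is $\log 2$. That gives exactly the stated constant $\frac{\log 2}{c+1}$, but it needs the full limiting conditional law on the innovation branch and ignores the information in the other branch.

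You keep both branches. Given $Z_2=z$, $Z_1$ has an atom near $z+\log\frac{c+1}{c}$ and a component near $-\log u$. The two-point inequality then forces any predictor, in $\mathcal H$ or not, to pay order $\frac{\min(c,1)}{c+1}\log u$. This needs only crude localization of the innovation-branch mass. It also proves $R^\star_{1\mid 2}(u)\to\infty$, which shows the theorem's lower bound is far from tight.

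Your ``main obstacle'' is milder than you expect. A direct density computation shows that, given $X_2=x$, the branch probabilities are exactly $c/(c+1)$ and $1/(c+1)$ for every $x$. On the innovation branch, $X_1$ has law $F_1$ truncated to $(0,x/c)$, so $P(\log Y_1\le M\mid X_2=x,B_2)\ge 1-e^{-M}$. Hence your lower bound is uniform in $z\ge 0$, and no Fatou argument or restriction to compacts is needed.

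Two small repairs are needed.
\begin{enumerate}
\item With $M$ fixed, the mass below $M-\log u$ is $(1-e^{-M})/(c+1)$, not asymptotically $1/(c+1)$. Either carry that factor through or let $M=M_u\to\infty$ slowly.
\item Delete the ``Alternatively'' sentences. They are unnecessary given the divergence bound, and the claimed sharpening is not justified. Since $d$ may depend on $u$ (e.g.\ $d\approx-\log u$), $h(z)\ge z+d$ does not force any deviation on $B_2$. Moreover, $\min(p,q)\log 2<q\log 2$ when $c<1$.
\end{enumerate}
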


The two parts of Theorem~\ref{thm:asymmetry_identifiability} (whose proof is deferred to Appendix~\ref{app:proof_asymmetry_identifiability}) reflect the asymmetry described above. In the forward
direction, when $X_1$ is very large, $X_2 = \max(cX_1, \epsilon_2) \approx cX_1$
dominates, so $Z_2 \approx Z_1 + \log c$ and the prediction error vanishes. In the
reverse direction, an extreme $X_2$ could have arisen from a large $X_1$ or from a large
$\epsilon_2$, and the two scenarios imply very different values of $X_1$; this residual
ambiguity is what keeps $R^{\star}_{1\mid 2}$ bounded away from zero. The lower bound $\log 2/(c+1)$ decreases as $c$ increases, since a stronger causal link
leaves a larger footprint on $X_2$, making the source somewhat easier to recover, but the
ambiguity never fully disappears at any finite $c$.

\subsection{Nodewise and Global Identifiability}
\label{sec:global_identifiability}

Theorem~\ref{thm:asymmetry_identifiability} establishes the directional mechanism in the canonical
bivariate case. We now extend this to the multivariate setting. The key challenge is that each node
may have multiple candidate parents, so identifiability must simultaneously address two questions:
underfitting (omitting a true parent) and overfitting (including spurious variables). These are
controlled by two assumptions.

\begin{assumption}[Non-vanishing conditional prediction error for omitted parents]
\label{assum:cond_asymmetry}
Let $\mathcal G^*=(V,E^*)$ be the true DAG, fix a node $j\in V$, let $\mathrm{pa}^*(j)$ denote its
true parent set, and let $\mathrm{nb}^*(j)=\{i:\{i,j\}\in E^*_{\mathrm{skel}}\}$ denote the
neighbors of $j$ in the true undirected skeleton. For any candidate parent set $A \subseteq
\mathrm{nb}^*(j)$ with $A \not\supseteq \mathrm{pa}^*(j)$, the omission of at least one true parent
induces a non-vanishing increase in population prediction error in the tail. Specifically, the
optimal $\ell_1$ tail prediction risk for node $j$ given parent set $A$ satisfies
$\liminf_{u \to \infty} \bigl[ R^\star_{j \mid A}(u) - R^\star_{j \mid \mathrm{pa}^*(j)}(u) \bigr] >
0.$

This assumption abstracts the multivariate extension of the bivariate tail asymmetry established in
Theorem~\ref{thm:asymmetry_identifiability}: conditioning on a strict subset of the true parents
does not eliminate the directional signal from the missing parent.
\end{assumption}

To operationalize the tail prediction risk for graph learning, we fix a sufficiently large threshold
$u$ where the asymptotic tail asymmetry of Assumption~\ref{assum:cond_asymmetry} takes effect and
work directly with $R^\star_{j\mid A}(u)$ as the population target. Thus, at the population level, we do
not include sample-size-dependent complexity penalties; those enter only in the empirical score used
for finite-sample orientation in Section~\ref{sec:orientation}.

\begin{assumption}[No population risk gain from spurious supersets]
\label{assum:spur_penalty}
Under the same notation, if $A \supsetneq \mathrm{pa}^*(j)$, then adding non-parent variables does
not improve the population tail prediction risk $R^*_{j\mid A}(u)\ \ge\ R^*_{j\mid\mathrm{pa}^*(j)}(u)$
for sufficiently large $u$. In the max-linear benchmark, Proposition~B.1 (Appendix~\ref{app:proof_multivariate_identifiability})
verifies the sharper equality case.
\end{assumption}

Assumption~\ref{assum:cond_asymmetry} controls underfitting, ensuring that omitting a true parent incurs a
strictly positive population risk increase. Assumption~\ref{assum:spur_penalty} controls
overfitting, ensuring that adding spurious parents cannot improve population risk. Together they imply that the
true parent set is the unique inclusion-minimal population minimizer at each node.

\begin{theorem}[Nodewise identifiability]
\label{thm:nodewise_identifiability}
Let $X=(X_1,\dots,X_p)^\top$ follow a recursive max-linear SEM on a true DAG $\mathcal G^*=(V,E^*)$,
and let $E^*_{\mathrm{skel}}$ denote the corresponding undirected skeleton. Fix a node $j\in V$.
Under Assumptions~\ref{assum:cond_asymmetry} and \ref{assum:spur_penalty}, for any fixed and
sufficiently large threshold $u$, the true parent set $\mathrm{pa}^*(j)$ is the unique
inclusion-minimal minimizer of $R^*_{j\mid A}(u)$ over all skeleton-compatible candidate parent sets
$A \subseteq \mathrm{nb}^*(j)$.
\end{theorem}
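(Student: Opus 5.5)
The plan is a direct combinatorial argument over the finite family of skeleton-compatible candidate sets, combining the two assumptions. Write $\mathcal A_j=\{A:\emptyset\neq A\subseteq \mathrm{nb}^*(j)\}$ (together with the empty set if $j$ is a source, handled separately below). Since $\mathrm{pa}^*(j)\subseteq\mathrm{nb}^*(j)$ by definition of the skeleton, the true parent set belongs to $\mathcal A_j$. The family $\mathcal A_j$ has at most $2^{|\mathrm{nb}^*(j)|}$ elements, and this finiteness is what turns the asymptotic assumptions into a statement valid at a single fixed threshold.

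The first step partitions $\mathcal A_j$ into three classes:
\begin{itemize}
\item[(a)] $A=\mathrm{pa}^*(j)$;
\item[(b)] $A\supsetneq \mathrm{pa}^*(j)$;
\item[(c)] $A\not\supseteq\mathrm{pa}^*(j)$, i.e.\ $A$ omits at least one true parent.
\end{itemize}
For each $A$ in class (c), Assumption~\ref{assum:cond_asymmetry} gives $\delta_A:=\liminf_{u\to\infty}[R^\star_{j\mid A}(u)-R^\star_{j\mid\mathrm{pa}^*(j)}(u)]>0$. Hence there is $u_A$ such that the gap exceeds $\delta_A/2>0$ for all $u\ge u_A$. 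For each $A$ in class (b), Assumption~\ref{assum:spur_penalty} gives some $u'_A$ beyond which $R^\star_{j\mid A}(u)\ge R^\star_{j\mid\mathrm{pa}^*(j)}(u)$. Set $u_0$ to be the maximum of all these finitely many thresholds. This is well defined precisely because $\mathcal A_j$ is finite, and this uniformity step is the only real care point.

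Next, fix any $u\ge u_0$. Every class (c) set has strictly larger risk than $\mathrm{pa}^*(j)$, and every class (b) set has risk at least as large. So $\mathrm{pa}^*(j)$ attains $\min_{A\in\mathcal A_j}R^\star_{j\mid A}(u)$. The minimizers therefore consist of $\mathrm{pa}^*(j)$ together with possibly some strict supersets of it; in the max-linear benchmark, Proposition~B.1 indicates equality typically holds, so such supersets genuinely can tie. Every minimizer $M$ thus satisfies $M\supseteq\mathrm{pa}^*(j)$. Consequently no minimizer is a strict subset of $\mathrm{pa}^*(j)$, which makes $\mathrm{pa}^*(j)$ inclusion-minimal. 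Moreover, any inclusion-minimal minimizer $M$ contains the minimizer $\mathrm{pa}^*(j)$, so minimality forces $M=\mathrm{pa}^*(j)$. This gives uniqueness.

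The step I expect to be the main obstacle is the degenerate case $\mathrm{pa}^*(j)=\emptyset$. There $R^\star_{j\mid\emptyset}$ is not defined by the displayed $\mathcal H_A$, since the conditioning event requires $A$ nonempty. I would adopt the convention that $\mathcal H_\emptyset$ consists of constants, with unconditional risk given $Z_j>0$. The argument then goes through verbatim: class (c) is empty, and Assumption~\ref{assum:spur_penalty} covers all nonempty $A$. The alternative is to state the theorem for non-source nodes and note that the ordering is then recovered from the nodes whose parent set is empty. A secondary subtlety is that $\mathrm{nb}^*(j)$ contains children of $j$. This causes no difficulty, because the assumptions are imposed for all $A\subseteq\mathrm{nb}^*(j)$ regardless of whether those elements are descendants.
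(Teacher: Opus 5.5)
Your proposal is correct and follows essentially the same route as the paper. Both split candidate sets into those omitting a true parent (handled by Assumption~\ref{assum:cond_asymmetry}) and strict supersets (handled by Assumption~\ref{assum:spur_penalty}), and both conclude that every minimizer contains $\mathrm{pa}^*(j)$, so $\mathrm{pa}^*(j)$ is the unique inclusion-minimal minimizer.

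Your version adds two pieces of care that the paper's proof leaves implicit. First, you take the maximum over the finitely many per-set thresholds; the paper quotes a single $u_{0,j}$ even though the liminf is per set. Second, you give a convention for the case $\mathrm{pa}^*(j)=\emptyset$, which the paper does not treat.
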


\begin{corollary}[Global identifiability by risk decomposability]
\label{cor:global_identifiability}
Let $R^*(G; u)=\sum_{j=1}^p R^*_{j\mid \mathrm{pa}_G(j)}(u)$
be the population decomposable risk over DAGs compatible with the true skeleton. If Theorem~\ref{thm:nodewise_identifiability} 
holds for every node $j\in V$, then $\mathcal G^*$ is
the unique inclusion-minimal minimizer of $R^*(G;u)$ over the skeleton-compatible class.
\end{corollary}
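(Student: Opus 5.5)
The plan is to reduce the global statement to Theorem~\ref{thm:nodewise_identifiability} through the additive structure of $R^*(G;u)$. Fix $u$ large enough that Theorem~\ref{thm:nodewise_identifiability} holds simultaneously at every node. Since $p$ is finite, we take $u$ at least the maximum of the nodewise thresholds. Every skeleton-compatible DAG $G$ assigns each node $j$ a parent set $\mathrm{pa}_G(j)\subseteq \mathrm{nb}^*(j)$. Each summand $R^*_{j\mid \mathrm{pa}_G(j)}(u)$ depends on $G$ only through $\mathrm{pa}_G(j)$, so the sum is minimized termwise whenever the nodewise minimizers can be realized jointly.

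\textbf{Step 1 (the truth is a minimizer).} By Theorem~\ref{thm:nodewise_identifiability}, for every $j$ and every $A\subseteq\mathrm{nb}^*(j)$ we have $R^*_{j\mid A}(u)\ge R^*_{j\mid \mathrm{pa}^*(j)}(u)$. Summing over $j$ gives $R^*(G;u)\ge R^*(\mathcal G^*;u)$ for every skeleton-compatible $G$. The family $\{\mathrm{pa}^*(j)\}_{j}$ is realized by $\mathcal G^*$ itself, which is a DAG compatible with its own skeleton. Hence the termwise lower bound is attained, and $\mathcal G^*$ is a global minimizer.

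\textbf{Step 2 (any minimizer contains the truth).} Suppose $G$ also attains the minimum. Since each term of $G$ is at least the corresponding term of $\mathcal G^*$ and the sums are equal, every term must agree: $R^*_{j\mid \mathrm{pa}_G(j)}(u)=R^*_{j\mid \mathrm{pa}^*(j)}(u)$ for all $j$. Thus each $\mathrm{pa}_G(j)$ is a nodewise minimizer. Assumption~\ref{assum:cond_asymmetry}, which is the content underlying the nodewise theorem, rules out any minimizer $A$ with $A\not\supseteq\mathrm{pa}^*(j)$. Therefore $\mathrm{pa}_G(j)\supseteq \mathrm{pa}^*(j)$ for every $j$, that is, $E^*\subseteq E(G)$ as directed edge sets.

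\textbf{Step 3 (inclusion-minimality and uniqueness).} Order graphs by inclusion of directed edge sets. By Step 2, every minimizer contains $\mathcal G^*$, so $\mathcal G^*$ is inclusion-minimal among minimizers. It is also the unique inclusion-minimal one: if $G'$ were another inclusion-minimal minimizer, then $\mathcal G^*\subseteq G'$, and minimality of $G'$ forces $G'=\mathcal G^*$.

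The main obstacle is Step 2. Theorem~\ref{thm:nodewise_identifiability} as stated only asserts that $\mathrm{pa}^*(j)$ is the unique inclusion-minimal minimizer. Taken alone, that statement does not exclude a minimizer that is incomparable to $\mathrm{pa}^*(j)$ but itself not minimal. I would therefore invoke the strict positive gap of Assumption~\ref{assum:cond_asymmetry} directly at the chosen large $u$, where the liminf being positive gives a strict inequality. This yields the containment $\mathrm{pa}_G(j)\supseteq\mathrm{pa}^*(j)$ and not merely minimality. A secondary subtlety is that orientation within the skeleton-compatible class is automatic. With $\mathrm{pa}_G(j)\supseteq \mathrm{pa}^*(j)$ and $\mathrm{pa}_G(j)\subseteq\mathrm{nb}^*(j)$, an edge $\{i,j\}$ oriented $j\to i$ in $\mathcal G^*$ could appear as $i\to j$ in $G$ only if $G$ also contains $j\to i$. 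That would create a 2-cycle, which is impossible in a DAG, so the containment pins down all orientations.
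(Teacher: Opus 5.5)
Your proof is correct and follows the paper's argument step for step: sum the nodewise inequalities, force termwise equality, deduce $\mathrm{pa}_G(j)\supseteq\mathrm{pa}^*(j)$, and conclude inclusion-minimal uniqueness. Your 2-cycle remark in fact gives more, namely that every minimizer equals $\mathcal G^*$, since any extra parent of $j$ inside $\mathrm{nb}^*(j)$ would be a true child of $j$. The obstacle you flag in Step~2 is not real, however: the candidate family is finite, so every minimizer contains an inclusion-minimal minimizer, which by uniqueness is $\mathrm{pa}^*(j)$. The conclusion of Theorem~\ref{thm:nodewise_identifiability} alone therefore yields the containment, exactly as the paper uses it, and your detour through Assumption~\ref{assum:cond_asymmetry} is valid but unnecessary.
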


The formal proofs are deferred to Appendix~\ref{app:proof_multivariate_identifiability}.
Proposition~B.1 (Appendix~\ref{app:proof_multivariate_identifiability}) verifies the superset non-improvement property in the
max-linear subclass; extending this verification to broader homogeneous non-decreasing SEMs remains
open.

\section{Methodology}\label{sec:methodology}

\subsection{Skeleton Recovery in the Tail Domain}\label{sec:skeleton}

\label{sec:tail_representation}Standard max-linear specifications satisfying
Assumption~\ref{assum:recursive_sem} induce multivariate regular variation on $X$
\citep{KluppelbergLauritzen2019}, placing it in the maximum domain of attraction of a max-stable
distribution. We standardize each margin to unit Pareto scale via the probability integral transform $Y_j = \frac{1}{1-F_j(X_j)},$
where $F_j$ denotes the marginal distribution function of $X_j$. While
Section~\ref{sec:identifiability} works with unit Fr\'echet innovations for theoretical
tractability, the unit Pareto standardization here follows the extremal graphical modeling
convention of \citet{EngelkeHitz2020}; the two scales are tail-equivalent (both regularly
varying with index~1), and the log-transformation in \eqref{eq:tail_vector} yields
approximately exponential tail coordinates that are convenient for subsequent estimation. We then focus on the tail region
defined by the exceedance event $\{\|Y\|_\infty > u\}$, where $\|Y\|_\infty = \max_{1\le
j\le p} Y_j$, for a sufficiently large threshold $u$, ensuring that at least one component
is extreme; this is the standard conditioning event in the
multivariate regular variation framework. Following the extremal graphical modeling framework of
\citet{EngelkeHitz2020}, we introduce the log-transformed tail coordinates
\begin{equation}
    Z = \log Y - \log u,
    \label{eq:tail_vector}
\end{equation}
which provide a convenient representation for characterizing extremal conditional dependence in high
dimensions. Note that $\{\|Y\|_\infty>u\}$ is equivalent to $\{Z\in\mathcal L\}$ with $\mathcal
L=\cup_{k=1}^p\{z:z_k>0\}$.

To make orientation feasible in high dimensions, we first estimate a sparse undirected candidate
skeleton. This stage serves a screening role analogous to sure independence screening in
high-dimensional regression, with the objective of reducing the edge set to a manageable superset that
retains all true edges with high probability, not to recover the skeleton exactly or to estimate
regression coefficients precisely. Removing implausible pairs before orientation keeps the
subsequent direction search tractable, while the sure screening guarantee
(Theorem~\ref{thm:skeleton_consistency}) ensures that the true DAG remains reachable. For this
purpose, we adopt the H\"usler--Reiss model as a working model for the tail law of $Z$, following
\citet{EngelkeHitz2020}.

\begin{assumption}[H\"usler--Reiss tail approximation]
\label{ass:hr}
The law of $Z$ restricted to $\mathcal L$ is approximated by a H\"usler--Reiss extremal model
parameterized by a variogram matrix $\Gamma\in\mathbb{R}^{p\times p}$. Through its associated
precision structure $\Theta$, the model admits a Gaussian graphical representation in the sense that
extremal conditional independence among the components of $Z$ corresponds to zero entries of
$\Theta$ \citep{EngelkeHitz2020}.
\end{assumption}

In practice, the H\"usler--Reiss approximation is applied after partialling out a proxy $P$ for latent common shocks, that is, unobserved variables that simultaneously
drive multiple nodes into the tail, so that edges in the estimated graph reflect direct
tail connections between variables rather than spurious co-exceedance driven by such
shocks. Under
Assumption~\ref{ass:hr}, the conditional distribution of $Z_j$ given $Z_{-j}$ in the tail admits a
regression representation whose coefficients are determined by $\Theta$ \citep{EngelkeHitz2020};
zeros in $\Theta$ thus correspond to zero nodewise regression coefficients, which motivates
$\ell_1$-penalized neighbourhood selection in the spirit of \citet{meinshausen2006high}.

\label{sec:proxy_adjustment}When latent common shocks are present, direct application of nodewise
$\ell_1$-penalized regression can produce dense candidate graphs, because such shocks induce strong
and largely symmetric tail co-exceedance across many pairs of nodes. In such settings, a
low-dimensional proxy $P$ for latent common shocks can be included as an unpenalized control in each
nodewise regression; as shown in Section~\ref{sec:simulation}, this adjustment substantially
improves recovery when confounding is present while leaving performance essentially unchanged when
it is not.
When a proxy is used, let $\{(Z^{(i)},P^{(i)})\}_{i=1}^k$ denote the log-exceedances and
corresponding proxy values. The regression representation of the H\"usler--Reiss model
(Assumption~\ref{ass:hr}) implies that $Z_j$ is approximately linear in $Z_{-j}$ in the
tail, so squared-error loss is a natural fit. Sparsity is enforced by an $\ell_1$ penalty
on the neighbour coefficients $\beta$, following the neighbourhood-selection principle of
\citet{meinshausen2006high}. The proxy $P$ and intercept $\alpha$ enter without a penalty,
so that common-shock variation is always absorbed regardless of the regularisation
strength. For each $j\in V$, we estimate
\begin{equation}
(\hat{\beta}^{(j)},\hat{\gamma}_j,\hat{\alpha}_j)
=\arg\min_{\beta\in\mathbb{R}^{p-1},\,\gamma\in\mathbb{R}^d},\,\alpha\in\mathbb{R}
\left\{
\frac{1}{2k}\sum_{i=1}^{k}\left(Z_{j}^{(i)}-\alpha-\sum_{m\neq j}\beta_{jm}Z_{m}^{(i)}-\gamma^\top P^{(i)}\right)^2
+\lambda_j\|\beta\|_1
\right\},
\label{eq:proxy_nodewise}
\end{equation}
so that the proxy term and intercept are not penalized while sparsity is enforced only on the
cross-node coefficients. The skeleton estimate $\hat{E}_{\mathrm{skel}}$ is constructed by thresholding, retaining
predictor $m$ for node $j$ if $|\hat{\beta}_m^{(j)}|>\tau$ for a threshold $\tau\propto\lambda$,
and taking the union of retained edges across all nodes.

\subsection{Score-Based Edge Orientation}\label{sec:orientation}

Given $\hat{E}_{\mathrm{skel}}$, we orient edges by minimizing a decomposable nodewise score over
DAGs compatible with the candidate graph. The score is designed to exploit the asymmetry identified in
Section~\ref{sec:main_identifiability}, so that directions that better capture extremal
propagation yield smaller tail-domain prediction error.

For a candidate parent set $\mathrm{pa}(j)$ (and proxy $P$ if used), we represent node $j$ by a
max-linear envelope in the log-tail domain. For each candidate parent $k\in \mathrm{pa}(j)$, the transmission offset $c_{k\to j}$ (i.e.,
the log-scale shift in tail level as a shock propagates from $k$ to $j$) is estimated by a low empirical quantile: $c_{k \to j} = \hat{Q}_q(Z_{\cdot j} - Z_{\cdot k})$ for $q\in(0,1)$,
and the baseline intercept $c_0$ is defined analogously. For the proxy, each component $r=1,\ldots,d$ of $P$ has its own offset $c_{r\to j} = \hat{Q}_q(Z_{\cdot j} - P_{\cdot r})$. The fitted
envelope for node $j$ at observation $t$ is then
\begin{equation}
\hat{Z}_{tj} = \max \left\{ c_0,\; \max_{\ell \in \mathrm{pa}(j)} \bigl(Z_{t\ell} + c_{\ell \to j}\bigr),\; \max_{r=1,\ldots,d}\bigl(P_{tr} + c_{r \to j}\bigr) \right\},
\label{eq:maxlinear_envelope}
\end{equation}
where $\max_{\ell\in\emptyset}(\cdot):=-\infty$ by convention, so the parent term is simply absent when $\mathrm{pa}(j)=\emptyset$.
The envelope captures the dominant upstream contribution, with the proxy term absorbing common
forcing.

For each node $j$, we evaluate a candidate parent set by the sum of absolute envelope (SAE) residuals, the
empirical counterpart of the $\ell_1$ tail prediction risk in Definition~\ref{def:tailrisk}:
\begin{equation}
\mathrm{SAE}_j = \sum_{t=1}^k \bigl| Z_{tj} - \hat{Z}_{tj} \bigr|,
\end{equation}
and combine this fit measure with an Extended Bayesian Information Criterion
\citep[EBIC;][]{FoygelDrton2010} complexity penalty. The local score is defined by
\begin{equation}
\mathrm{Score}(j, \mathrm{pa}(j)) = \frac{k}{2} \log\!\left(\frac{\mathrm{SAE}_j}{k}\right) + \frac{1}{2}\bigl(\log k + 2\gamma_{\mathrm{EBIC}}\log p\bigr)\,|\mathrm{pa}(j)|,
\end{equation}
where $p$ is the number of observed variables, $\mathrm{SAE}_j > 0$ by construction (since the
max-linear envelope predictor generically differs from $Z_{tj}$ on a positive-measure set of tail
observations), and the proxy $P$ is excluded from the parent-set penalty. Because the score decomposes across nodes, we optimize the global score $\sum_{j=1}^p
\mathrm{Score}(j, \mathrm{pa}(j))$\label{sec:greedy_dag_search} by greedy search over DAGs
compatible with $\hat{E}_{\mathrm{skel}}$. Starting from the undirected skeleton, the algorithm
iteratively applies the admissible edge addition, deletion, or reversal that yields the largest
score reduction. Moves that create directed cycles or violate the maximum in-degree constraint are
excluded. The search terminates when no admissible move improves the score, and the resulting graph
is returned as the estimated DAG. The consistency theorem below, however, is stated for the oracle global minimizer,
that is, the DAG that exactly minimizes the empirical score over the admissible class
regardless of computational cost; the greedy search should therefore be viewed as a
computational approximation to that target.

\subsection{The Proposed Algorithm}

Algorithm~\ref{alg:main} summarises the full S3ME procedure. When no proxy is available, the proxy offset terms $c_{r\to j}$ and components $P^{(i)}_r$ ($r=1,\ldots,d$) are omitted from the offset estimation and envelope computation steps, and the max-linear envelope simplifies to
$\hat{Z}^{(i)}_j = \max\Bigl\{c_0,\ \max_{\ell\in\mathrm{pa}(j)}\bigl(Z^{(i)}_\ell + c_{\ell\to j}\bigr)\Bigr\}.$
The orientation step implements a greedy approximation to the oracle global score minimiser in Theorem~\ref{thm:orientation_consistency}. While the consistency guarantee in that theorem is established for the exact minimiser, the greedy search provides a computationally feasible surrogate in practice. The skeleton stage requires solving \(p\) independent nodewise Lasso problems, each with computational cost \(O(kp)\), yielding an overall complexity of \(O(kp^2)\). In the orientation stage, the greedy search examines at most \(O(|\hat{E}_{\mathrm{skel}}| + p)\) candidate moves per iteration. Offset estimation and SAE evaluation cost \(O(kd_{\max})\) per affected node, and offsets corresponding to unchanged parent-child pairs can be cached across iterations. The bound \(d_{\max}\) therefore plays a dual role, controlling the per-move computational cost while also acting as a structural regulariser on the estimated DAG. In practice, the sparsity guarantee \( |\hat{E}_{\mathrm{skel}}| = O(ps_{\max}) \) from Theorem~\ref{thm:skeleton_consistency} helps keep the orientation search tractable in moderate dimensions. Two implementation details are worth noting. First, the marginal standardization $\hat{Y}_j^{(i)} = 1/(1-\hat{F}_j(X_j^{(i)}))$ uses the rescaled empirical CDF $\hat{F}_j(x) = \frac{1}{n+1}\sum_{i'=1}^n \mathbf{1}(X_j^{(i')}\le x)$ (pseudo-observations with denominator $n+1$), so that $\hat{Y}_j^{(i)} = (n+1)/(n+1-R_j^{(i)})$ where $R_j^{(i)}$ is the rank of $X_j^{(i)}$; this ensures $\hat{Y}_j^{(i)}<\infty$ for all observations. Second, the orientation search is initialised from a DAG obtained by assigning an arbitrary acyclic orientation to the skeleton edges, so that all subsequent moves operate on a directed graph.
% Theorem~\ref{thm:skeleton_consistency} helps keep the orientation search tractable in moderate dimensions.

\begin{algorithm}[H]
\caption{S3ME: Sparse Structural Causal Discovery in Multivariate Extremes}
\label{alg:main}
\renewcommand{\algorithmicensure}{\textbf{Output:}}
\begin{algorithmic}[1]
\REQUIRE Observations $\{X^{(i)}\}_{i=1}^n$ with $X^{(i)}\in\mathbb{R}^p$, proxy values $\{P^{(i)}\}_{i=1}^n$ with $P^{(i)}\in\mathbb{R}^d$, tail threshold $u$, regularisation parameters $\{\lambda_j\}$, coefficient threshold $\tau$, offset quantile $q \in (0,1)$, EBIC parameter $\gamma_{\mathrm{EBIC}}$, maximum in-degree $d_{\max}$.
\ENSURE Estimated DAG $\hat{G}$.
\STATE \textbf{Tail Preprocessing.}
\STATE Compute $\hat{Y}_j^{(i)} = 1/(1-\hat{F}_j(X_j^{(i)}))$ via the empirical rank transform.
\STATE Retain the $k$ observations with $\|\hat{Y}^{(i)}\|_\infty > u$, re-index them as $i=1,\ldots,k$, and set $Z^{(i)} = \log\hat{Y}^{(i)} - \log u$; retain the corresponding $\{P^{(i)}\}_{i=1}^k$.
\STATE \textbf{Skeleton Estimation.}
\FOR{$j = 1, \ldots, p$}
\STATE Solve the proxy-adjusted nodewise Lasso with penalty $\lambda_j$ on $\beta$ and unpenalised proxy $P$ to obtain $\hat{\beta}^{(j)}$.
\ENDFOR
\STATE Form $\hat{E}_{\mathrm{skel}}$ with edges $\{j,\ell\}$ if $\max(|\hat{\beta}_\ell^{(j)}|,\,|\hat{\beta}_j^{(\ell)}|)>\tau$.
\STATE \textbf{Score-Based Orientation.}
\STATE Initialise $\hat{G}$ as the empty DAG on vertex set $V$.
\REPEAT
\FOR{each admissible move $a$ (edge addition, deletion, or reversal in $\hat{G}$, with additions restricted to pairs in $\hat{E}_{\mathrm{skel}}$, preserving acyclicity and in-degree $\le d_{\max}$)}
    \FOR{each node $j$ whose parent set changes under $a$}
        \STATE Estimate offsets $c_{\ell\to j} = \hat{Q}_q\bigl(\{Z^{(i)}_j - Z^{(i)}_\ell\}_{i=1}^{k}\bigr)$ for each $\ell\in\mathrm{pa}(j)$; set $c_{r\to j} = \hat{Q}_q\bigl(\{Z^{(i)}_j - P^{(i)}_r\}_{i=1}^k\bigr)$ for each $r=1,\ldots,d$; and $c_0 = \hat{Q}_q\bigl(\{Z^{(i)}_j\}_{i=1}^k\bigr)$.
        \STATE Compute $\hat{Z}^{(i)}_j = \max\bigl\{c_0,\;\max_{\ell\in\mathrm{pa}(j)}(Z^{(i)}_\ell+c_{\ell\to j}),\;\max_{r=1,\ldots,d}(P^{(i)}_r+c_{r\to j})\bigr\}$ for each $i=1,\ldots,k$.
        \STATE Evaluate $\mathrm{SAE}_j = \sum_{i=1}^{k} |Z^{(i)}_j - \hat{Z}^{(i)}_j|$ and $\mathrm{Score}_{\gamma_{\mathrm{EBIC}}}(j,\mathrm{pa}(j))$.
    \ENDFOR
\ENDFOR
\STATE Set $a^{*} = \arg\min_{a}\,\Delta\mathrm{Score}_{\gamma_{\mathrm{EBIC}}}(a)$.
\IF{$\Delta\mathrm{Score}_{\gamma_{\mathrm{EBIC}}}(a^{*}) < 0$}
    \STATE Apply $a^{*}$ to $\hat{G}$.
\ENDIF
\UNTIL{no admissible move reduces the score.}
\STATE \textbf{return} $\hat{G}$.
\end{algorithmic}
\end{algorithm}

\section{Theoretical Properties}\label{sec:theory}

Let $\mathcal{G}^{*}=(V,E^{*})$ be the true causal DAG with node set $V=\{1,\dots,p\}$, and let
$E^{*}_{\mathrm{skel}}=\bigl\{\{i,j\}: (i,j)\in E^{*}\ \text{or}\ (j,i)\in E^{*}\bigr\}$
denote the corresponding undirected skeleton. The theoretical analysis reflects the division of
labour between the two stages, where the skeleton procedure is a screening device that retains all true
edges with high probability, while the orientation step eliminates spurious edges and recovers the
correct directions through the EBIC penalty and tail-induced asymmetry respectively.

\subsection{Sure Screening for the Skeleton}
\label{sec:thm_skel}

We work in a high-dimensional extreme-value regime in which both the dimension $p$ and the
effective tail sample size $k$ (the number of exceedances retained for inference) tend to infinity,
potentially allowing $p\gg k$. We assume $k=k(n)$ is an intermediate sequence with $k\to\infty$ and
$k/n\to 0$ as $n\to\infty$.

Under the H\"usler--Reiss working model (Assumption~\ref{ass:hr}), the precision structure $\Theta$
introduced in Section~\ref{sec:skeleton} induces a covariance-type matrix $\Sigma=\Theta^{-1}$. As
shown by \citet{EngelkeHitz2020}, the nodewise regression coefficients in the tail can be expressed
analogously to the Gaussian case as $\beta_{jk}=-\Theta_{jk}/\Theta_{jj}$, so zeros in $\beta$
encode zeros in $\Theta$.

\begin{assumption}[Regularity conditions for sure screening]
\label{assum:regularity_conditions}
Under the H\"usler--Reiss working model and the proxy-adjusted nodewise regression, assume:
(i) $s_{\max}$ is sufficiently small relative to $k/\log p$,
(ii) the relevant submatrices of $\Sigma=\Theta^{-1}$ are uniformly well-conditioned,
(iii) the minimum non-zero tail regression coefficient is above the screening noise level, and
(iv) the nodewise score bias from the log-sum-exp approximation is of order at most $\lambda$.
Precise constants and rate thresholds are deferred to Appendix~\ref{app:proof_prop3}.
\end{assumption}

Assumption~\ref{assum:regularity_conditions} provides the extremal analogs of the standard
high-dimensional conditions used in neighborhood selection \citep{meinshausen2006high,
wainwright2009sharp}. We do not require the irrepresentability condition needed for exact support recovery
\citep[Condition~1]{wainwright2009sharp}; rather, the conditions above ensure stable local design,
detectable tail signals, and controlled approximation bias at effective sample size $k$.

In the tail regime, the log-sum-exp proxy is a uniformly bounded approximation of the max-linear
structural equation; a formal statement and proof are given in
Proposition~C.1 (Appendix~\ref{app:proof_prop3}) in Appendix~\ref{app:proof_prop3}.

\begin{theorem}[Sure screening for the skeleton]
\label{thm:skeleton_consistency}
Suppose Assumptions~\ref{ass:hr} and~\ref{assum:regularity_conditions} hold,
and the tail linearization error is uniformly bounded in the sense of
Proposition~C.1 (Appendix~\ref{app:proof_prop3}). Let $\hat{E}(\lambda)$ denote the undirected edge set
obtained by the union of selected neighborhoods from nodewise $\ell_1$-penalized regression
(Section~\ref{sec:skeleton}), with the proxy included as an unpenalized regressor if used, using
regularization parameter $\lambda>C\sqrt{\log p/k}$ for a sufficiently large constant $C>0$, as is
standard in high-dimensional regression theory. Then there exist constants $c_{1},c_{2},c_{3}>0$
such that for sufficiently large $k$ and $p$,
\begin{equation}
    \mathbb{P}\!\left(E^{\star}_{\mathrm{skel}}\subseteq\hat{E}(\lambda)\right)
    \ \ge\ 1-c_{1}\exp(-c_{2}k\lambda^{2}),
\end{equation}
and, on this event, the number of selected edges satisfies $|\hat{E}(\lambda)|\le
c_{3}\,p\,s_{\max}$.
\end{theorem}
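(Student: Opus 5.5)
The plan is to treat each nodewise problem \eqref{eq:proxy_nodewise} as a high-dimensional linear regression with a small, bounded misspecification term, and to run the standard Lasso $\ell_2/\ell_\infty$ analysis node by node. A union bound over the $p$ nodes then gives the stated probability.

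For fixed $j$, I would first eliminate the unpenalized nuisance parameters $(\alpha,\gamma)$. Projecting $Z_j$ and the columns $Z_{-j}$ onto the orthogonal complement of $\mathrm{span}\{\mathbf 1,P\}$ (Frisch--Waugh) reduces \eqref{eq:proxy_nodewise} to an ordinary Lasso in $\beta$ with design $\tilde Z_{-j}$. Under Assumption~\ref{ass:hr}, the population target is $\beta^{(j)}_m=-\Theta_{jm}/\Theta_{jj}$. I would then write
\[
\tilde Z_j=\tilde Z_{-j}\beta^{(j)}+\tilde\xi_j+\tilde b_j ,
\]
where $\tilde\xi_j$ is the Hüsler--Reiss regression error and $\tilde b_j$ is the log-sum-exp linearization bias. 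Proposition~C.1 bounds $\tilde b_j$ uniformly, and Assumption~\ref{assum:regularity_conditions}(iv) makes its contribution to the score $O(\lambda)$.

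The probabilistic core has two parts:
\begin{enumerate}
\item \emph{Deviation bound.} Show $\|\tilde Z_{-j}^\top\tilde\xi_j/k\|_\infty\le\lambda/4$ with probability at least $1-c\,p\exp(-c'k\lambda^2)$. The entries of $Z$ on $\mathcal L$ have exponential-type tails (log of Pareto), so the products are sub-exponential and a Bernstein inequality applies. When $\lambda>C\sqrt{\log p/k}$ with $C$ large, the factor $p$ is absorbed: $p\,e^{-c'k\lambda^2}\le e^{-c_2k\lambda^2}$ because $\log p\le k\lambda^2/C^2$.
\item \emph{Restricted eigenvalue.} Show that the sample Gram matrix $\tilde Z_{-j}^\top\tilde Z_{-j}/k$ satisfies a restricted eigenvalue condition on cones of sparsity $s_{\max}$. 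This follows from Assumption~\ref{assum:regularity_conditions}(ii) plus concentration of the Gram entries at rate $\sqrt{\log p/k}$, using (i) so that $s_{\max}\sqrt{\log p/k}$ is small.
\end{enumerate}

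On the intersection of these events, the basic inequality gives
\[
\|\hat\beta^{(j)}-\beta^{(j)}\|_\infty\le\|\hat\beta^{(j)}-\beta^{(j)}\|_2\lesssim\sqrt{s_{\max}}\,\lambda
\]
(or simply $\lesssim\lambda$ after sharpening via (i)). Assumption~\ref{assum:regularity_conditions}(iii) places $\min|\beta^{(j)}_m|$ above this level plus $\tau$. Hence every true neighbour survives the thresholding at $\tau\propto\lambda$. Since the skeleton edges correspond to nonzero $\Theta_{jm}$ under the working model, and $\hat E$ takes the union over nodes, $E^\star_{\mathrm{skel}}\subseteq\hat E(\lambda)$. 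A union bound over $j=1,\dots,p$ again costs only a factor $p$, absorbed as above, giving $1-c_1\exp(-c_2k\lambda^2)$.

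For the cardinality bound, I would use the standard sparsity property of the Lasso under the restricted eigenvalue condition (sparse eigenvalue bounds, \citealt{meinshausen2006high}; Bickel--Ritov--Tsybakov). On the same event, $|\mathrm{supp}(\hat\beta^{(j)})|\le c\,s_{\max}$ for each node. If needed, the thresholding at $\tau$ can be used instead: the number of $m$ with $|\hat\beta_m^{(j)}|>\tau$ is at most $\|\hat\beta^{(j)}\|_1/\tau$, and on the event $\|\hat\beta^{(j)}\|_1\lesssim s_{\max}\cdot(\max|\beta|+\lambda)$. Summing over the $p$ nodes gives $|\hat E|\le c_3\,p\,s_{\max}$.

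The main obstacle is the deviation bound. The tail sample $\{Z^{(i)}\}$ is selected by the rank-based event $\|\hat Y\|_\infty>u$ and built from empirical margins, so its points are neither i.i.d. nor exactly Hüsler--Reiss. My first attempt would condition on the number of exceedances and treat the selected points as approximately i.i.d. from the limiting law on $\mathcal L$. I would then absorb two effects into the bias term $\tilde b_j$ controlled by (iv): the marginal estimation error, which is $O_p(k^{-1/2})$ uniformly via empirical tail process results, and the gap between the pre-asymptotic tail and the limit. This is where the precise constants from Appendix~\ref{app:proof_prop3} would be needed.
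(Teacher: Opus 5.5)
Your overall route is the paper's: partial out the unpenalised $(\alpha,\gamma)$, model the residualised response as a sparse linear model plus H\"usler--Reiss noise plus the bounded log-sum-exp bias of Proposition~C.1, get an $\ell_2$ error of order $\sqrt{s_{\max}}\,\lambda/\kappa$ under a restricted eigenvalue condition, and finish with the beta-min condition and a union bound. You are more careful than the paper in two places. First, you absorb the union-bound factor into $\exp(-c_2k\lambda^2)$ using $\log p<k\lambda^2/C^2$; the paper stops at $1-C_1ps_{\max}\exp(-C_2k\lambda^2)$. Second, you require beta-min to exceed the estimation error plus $\tau$, which is what true neighbours need to survive the thresholding that defines $\hat E$; the paper only shows $\hat\beta_i\neq0$.

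The gap is in the cardinality bound. Your primary route is a Bickel--Ritov--Tsybakov or Meinshausen--Yu bound $|\mathrm{supp}(\hat\beta^{(j)})|\lesssim s_{\max}$ for the unthresholded Lasso. That needs an upper (sparse) eigenvalue bound on the Gram matrix. The precise form of condition (ii) of Assumption~\ref{assum:regularity_conditions} (Appendix~\ref{app:proof_prop3}) only gives the lower bound $\Lambda_{\min}(\Sigma_{\mathcal N_j\mathcal N_j})\ge C_{\min}$.

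Your fallback fails as written. With $\tau=\alpha\lambda$, bounding the number of $m$ with $|\hat\beta^{(j)}_m|>\tau$ by $\|\hat\beta^{(j)}\|_1/\tau$ gives $\|\beta^{(j)}\|_1/(\alpha\lambda)+O(s_{\max})$. The first term is not $O(s_{\max})$: for coefficients of fixed size it grows like $s_{\max}/\lambda\to\infty$.

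The repair, which is the paper's argument, is to count only false positives. For $m\notin\mathcal N_j$ one has $\hat\beta^{(j)}_m=\hat\beta^{(j)}_m-\beta^{(j)}_m$, so
\[
\tau^2\,\bigl|\{m\notin\mathcal N_j:\ |\hat\beta^{(j)}_m|>\tau\}\bigr|\le\|\hat\beta^{(j)}-\beta^{(j)}\|_2^2\lesssim s_{\max}\lambda^2/\kappa^2 .
\]
This gives $O(s_{\max}/(\kappa\alpha)^2)$ false positives per node, using only the $\ell_2$ bound you already have. Summing over nodes yields $c_3ps_{\max}$.

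Some smaller points:

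\emph{Bias fluctuation.} Condition (iv) bounds only the population moment $\mathbb{E}[X^{\mathrm{res}}\tilde\Delta]$. You must also concentrate $\frac1k\tilde Z_{-j}^\top\tilde b_j$ around it. Since $\tilde b_j$ is bounded, the products are sub-exponential and Bernstein applies, as in the paper.

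\emph{Noise term tails.} $\tilde Z_{-j,m}\tilde\xi_j$ is a product of two exponential-tailed variables, so it is only $\psi_{1/2}$, not sub-exponential; the same holds for the Gram entries in your RE step. You therefore need a sub-Weibull Bernstein inequality or truncation. The cost is a mild growth condition such as $(\log p)^3=O(k)$. The paper sidesteps this by citing the standard Lasso bound.

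\emph{$\ell_\infty$ sharpening.} Drop the parenthetical sharpening to $\|\hat\beta^{(j)}-\beta^{(j)}\|_\infty\lesssim\lambda$. It requires incoherence, which the paper deliberately avoids, and (iii) is calibrated to $\sqrt{s_{\max}}\,\lambda$ anyway.
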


Sure screening, rather than exact support recovery, suffices for the subsequent
orientation step. Indeed, the sure screening guarantee ensures that the true DAG $G^*$ belongs
to the search space $\mathcal{G}(\hat{E}_{\mathrm{skel}})$, so the correct orientation
is always reachable, while the bound $|\hat{E}|\le c_3\,p\,s_{\max}$ keeps the search
space manageable. Within this space, the orientation conditions, including a positive
underfitting risk gap $\Delta_{\mathrm{miss}}>0$, superset non-improvement, and
two-sided bounds on admissible population risks (Assumption~\ref{assum:overfit_gain}),
ensure that orienting an edge incorrectly raises the empirical score. Underfitted parent
sets are detectable because the missing-parent risk gap is amplified in the tail, while
spurious parents are excluded by the EBIC penalty. Together, these properties allow the
orientation step to simultaneously discard false positives from the skeleton and recover
the correct edge directions.

\subsection{Consistency of Score-Based Orientation}\label{sec:thm_orientation}

\begin{lemma}[Nodewise uniform empirical risk convergence]
\label{lem:uniform_score}
Assume that, uniformly over all admissible pairs $(j,A)$, the empirical tail prediction risk
$\hat{R}_{j\mid A}(u)=\mathrm{SAE}_j(A)/k$ satisfies
\[
\mathbb{P}\bigl(\bigl|\hat{R}_{j\mid A}(u) - R^*_{j\mid A}(u)\bigr| > \epsilon\bigr) \le c_1
\exp\bigl\{-c_2\, k\min(\epsilon^2,\,\epsilon)\bigr\},
\]
for some constants $c_1, c_2 > 0$ and all $\epsilon > 0$. Under the conditions of Theorem~\ref{thm:skeleton_consistency}
and $\log p = o(k)$,
\[
\max_{1\le j\le p}\sup_{\substack{A \subseteq \mathrm{nb}(j;\hat{E}_{\mathrm{skel}})\\|A|\le
s_{\max}}} \bigl|\hat{R}_{j\mid A}(u) - R^*_{j\mid A}(u)\bigr| =
O_p\!\left(\sqrt{\frac{(s_{\max}+1)\log p}{k}}\right).
\]
\end{lemma}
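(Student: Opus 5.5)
The plan is a union bound over all admissible pairs $(j,A)$, combined with the assumed exponential tail bound and a count of how many such pairs there are. The count is controlled on the sure-screening event of Theorem~\ref{thm:skeleton_consistency}. Write $\mathcal E=\{E^{\star}_{\mathrm{skel}}\subseteq\hat E(\lambda),\ |\hat E(\lambda)|\le c_3 p s_{\max}\}$. We will actually bound the supremum over the larger, deterministic class of all $A\subseteq V\setminus\{j\}$ with $|A|\le s_{\max}$. This contains every $A\subseteq\mathrm{nb}(j;\hat E_{\mathrm{skel}})$ with $|A|\le s_{\max}$, whatever $\hat E_{\mathrm{skel}}$ turns out to be, so the data-dependence of the index set causes no measurability or conditioning problems. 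The event $\mathcal E$ is then only used to note that on it the relevant class is much smaller; it is not needed for the rate itself. The number of pairs is at most
\[
N\le p\sum_{s=0}^{s_{\max}}\binom{p-1}{s}\le p\,(ep)^{s_{\max}}+1,
\qquad\text{so}\qquad \log N\le C_0(s_{\max}+1)\log p .
\]

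The key step: by the union bound and the assumed concentration inequality,
\[
\mathbb P\Bigl(\max_{j}\sup_{A}\bigl|\hat R_{j\mid A}(u)-R^*_{j\mid A}(u)\bigr|>\epsilon\Bigr)\le c_1\exp\bigl\{\log N-c_2k\min(\epsilon^2,\epsilon)\bigr\}.
\]
Set $\epsilon=M\sqrt{(s_{\max}+1)\log p/k}$ with $M$ large. Implicitly we need $(s_{\max}+1)\log p/k\to0$. This strengthens the stated $\log p=o(k)$, but it follows from Assumption~\ref{assum:regularity_conditions}(i), which takes $s_{\max}$ small relative to $k/\log p$. Under it $\epsilon\le1$ eventually, so $\min(\epsilon^2,\epsilon)=\epsilon^2$. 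The exponent is then at most $(C_0-c_2M^2)(s_{\max}+1)\log p$, which tends to $-\infty$ once $M^2>C_0/c_2$. The probability therefore goes to zero for every such $M$, which is exactly the $O_p$ statement. If one prefers to restrict to the random class directly, simply add $\mathbb P(\mathcal E^c)\le c_1\exp(-c_2k\lambda^2)\to0$. This also vanishes, because $k\lambda^2\ge C^2\log p\to\infty$.

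The main obstacle is that the assumed concentration inequality is stated for fixed $(j,A)$. Two things could break it. First, $\mathrm{SAE}_j(A)$ uses offsets $c_{\ell\to j}$ estimated by empirical quantiles from the same data. Second, the $Z$'s come from rank-transformed margins and a random exceedance set. I will treat the hypothesis as already covering these fitted risks uniformly over admissible pairs, as the lemma's wording ("uniformly over all admissible pairs") permits, so no extra step is needed beyond the union bound. If it had to be justified, I would first use Lipschitz continuity of $h\mapsto|Z_j-h(Z_A)|$ in the offsets, so that estimation error in $c$ moves the risk by at most the offset error. I would then add quantile concentration, which contributes at most another $\log$-cardinality factor of the same order.
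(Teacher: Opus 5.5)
Your proposal is correct and takes essentially the same route as the paper: a union bound over roughly $p^{s_{\max}+1}$ node--parent-set pairs using the assumed exponential bound, evaluated at $\epsilon\asymp\sqrt{(s_{\max}+1)\log p/k}$, where $\min(\epsilon^2,\epsilon)=\epsilon^2$ and the exponent diverges to $-\infty$. The only difference is that you take the union over the deterministic class $\{A\subseteq V\setminus\{j\}:|A|\le s_{\max}\}$ rather than conditioning on the screening event of Theorem~\ref{thm:skeleton_consistency}, which is slightly cleaner: the paper's count $\binom{|\mathrm{nb}(j;\hat E_{\mathrm{skel}})|}{m}\le p^m$ never uses that event, and conditioning on it would sit awkwardly with the unconditional concentration hypothesis.
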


The proof is deferred to Appendix~\ref{app:proof_uniform_score}.

\begin{assumption}[Equal-risk superset fit-gain control]
\label{assum:overfit_gain}
Let $\hat f_j(A)=\frac{k}{2}\log\hat R_{j\mid A}(u)$ with $\hat R_{j\mid A}(u)=\mathrm{SAE}_j(A)/k$.
We assume the following uniform nested-model regularity condition: for all admissible pairs
$A^*\subsetneq A$ in $\mathcal A_j^{c_3}$ satisfying
$R^*_{j\mid A}(u)=R^*_{j\mid A^*}(u),$
the quasi-likelihood-ratio gain from adding $B=A\setminus A^*$ obeys
\[
\max_{1\le j\le p}\ \sup_{\substack{A^*\subsetneq A,\ A,A^*\in\mathcal A_j^{c_3}\\R^*_{j\mid
A}(u)=R^*_{j\mid A^*}(u)}}
\frac{\hat f_j(A^*)-\hat f_j(A)}{|B|}
=O_p(1).
\]
Equivalently,
$\hat f_j(A^*)-\hat f_j(A)=O_p(|B|)$
uniformly over equal-risk supersets.

\end{assumption}
In practice, the above means that adding $|B|$ spurious parents to the true parent set $A^*$
can improve the empirical fit by at most $O_p(|B|)$, which the EBIC penalty, growing as
$|B|\log p$, dominates in high dimensions. Intuitively, the fit gain from spurious
parents is too small to offset the dimension-dependent penalty.

\begin{theorem}[Consistency of the oracle global empirical score minimizer]
\label{thm:orientation_consistency}
Assume the conditions of Theorem~\ref{thm:skeleton_consistency}. For each node $j$, let
$\mathcal{A}_j^{c_3}$ denote the class of admissible parent sets $A\subseteq V\setminus\{j\}$ with
$|A|\le s_{\max}$ such that $A\subseteq \mathrm{nb}(j;E)$ for some $E\supseteq E^*_{\mathrm{skel}}$
satisfying $|E|\le c_3ps_{\max}$. Suppose the following conditions hold at the chosen threshold
$u$: (i) nodewise uniform underfitting gap,
\[
\inf_{1\le j\le p}\ \inf_{\substack{A\in\mathcal{A}_j^{c_3}\\A\not\supseteq \mathrm{pa}^*(j)}}
\Bigl\{R^*_{j\mid A}(u)-R^*_{j\mid\mathrm{pa}^*(j)}(u)\Bigr\}
\ge \Delta_{\mathrm{miss}} > 0;
\]
(ii) superset non-improvement,
$R^*_{j\mid A}=R^*_{j\mid\mathrm{pa}^*(j)}(u)$ for all $A\supseteq \mathrm{pa}^*(j)$ in
$\mathcal{A}_j^{c_3}$ (verified for the recursive max-linear benchmark by Proposition~B.1 in
Appendix~\ref{app:proof_multivariate_identifiability}); and (iii) two-sided admissible-risk bounds,
\[
0<r_{\min}(u)\le \inf_{1\le j\le p}\inf_{A\in\mathcal A_j^{c_3}}R^*_{j\mid A}(u)
\le \sup_{1\le j\le p}\sup_{A\in\mathcal A_j^{c_3}}R^*_{j\mid A}(u)\le r_{\max}(u)<\infty.
\]
Assume additionally Assumption~\ref{assum:overfit_gain} and
\[
s_{\max}\sqrt{\frac{\log p}{k}}=o(1).
\]
Let
\[
\hat{G} \in \arg\min_{G\in\mathcal{G}(\hat{E}_{\mathrm{skel}})}\hat{S}(G)
\]
be any global empirical EBIC score minimizer over DAGs compatible with $\hat{E}_{\mathrm{skel}}$.
Then
\[
\mathbb{P}(\hat{G}=G^*) \to 1 \qquad \text{as } k,p\to\infty.
\]
\end{theorem}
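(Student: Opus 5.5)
We work on the intersection of three high-probability events and show that on it every competitor $G\neq G^*$ in $\mathcal{G}(\hat E_{\mathrm{skel}})$ has strictly larger score than $G^*$. The first event is the sure-screening event $\mathcal{E}_1=\{E^*_{\mathrm{skel}}\subseteq\hat E_{\mathrm{skel}},\ |\hat E_{\mathrm{skel}}|\le c_3ps_{\max}\}$. Theorem~\ref{thm:skeleton_consistency} gives $\mathbb{P}(\mathcal{E}_1)\to1$, since $k\lambda^2\ge C^2\log p\to\infty$. On $\mathcal{E}_1$ we have $G^*\in\mathcal{G}(\hat E_{\mathrm{skel}})$, and every parent set of every competitor lies in $\mathcal{A}_j^{c_3}$.

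The second event $\mathcal{E}_2$ is the uniform risk-concentration event from Lemma~\ref{lem:uniform_score}:
\[
\max_j\sup_{A\in\mathcal{A}_j^{c_3}}\bigl|\hat R_{j\mid A}-R^*_{j\mid A}\bigr|\le\delta_k:=M\sqrt{(s_{\max}+1)\log p/k}.
\]
Here $M$ is chosen so that $\mathbb{P}(\mathcal{E}_2)\ge1-\eta$ for a prescribed $\eta>0$. By the rate assumption $\delta_k=o(1)$, so eventually $\delta_k<\min\{r_{\min}/2,\Delta_{\mathrm{miss}}/4\}$. The third event $\mathcal{E}_3$ is the event of Assumption~\ref{assum:overfit_gain}, on which $\hat f_j(A^*)-\hat f_j(A)\le K|B|$ for some finite $K$, again with probability at least $1-\eta$.

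Since the score decomposes as $\hat S(G)=\sum_j \mathrm{Score}(j,\mathrm{pa}_G(j))$, it suffices to compare nodewise. For each $j$ we show $\mathrm{Score}(j,A)\ge\mathrm{Score}(j,\mathrm{pa}^*(j))$, with strict inequality whenever $A\neq\mathrm{pa}^*(j)$. Any $G\ne G^*$ differs from $G^*$ at some node, so $\hat S(G)>\hat S(G^*)$ and the global minimizer must be $G^*$. There are two cases.

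\emph{Underfitting}, $A\not\supseteq\mathrm{pa}^*(j)$. By condition (i) and $\mathcal{E}_2$,
\[
\hat R_{j\mid A}-\hat R_{j\mid\mathrm{pa}^*}\ge\Delta_{\mathrm{miss}}-2\delta_k\ge\Delta_{\mathrm{miss}}/2.
\]
Using condition (iii) and $\log x-\log y\ge(x-y)/x$, the fit term increases by at least $\tfrac{k}{2}\cdot\tfrac{\Delta_{\mathrm{miss}}}{2(r_{\max}+\delta_k)}\gtrsim k$. The penalty can decrease by at most $\tfrac12(\log k+2\gamma_{\mathrm{EBIC}}\log p)s_{\max}$. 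This is $o(k)$ because $s_{\max}\log p=o(k)$ follows from $s_{\max}\sqrt{\log p/k}=o(1)$, and $s_{\max}\log k=o(k)$ holds similarly. So the score strictly increases.

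\emph{Overfitting}, $A\supsetneq\mathrm{pa}^*(j)$. Condition (ii) gives equal population risk, so $\mathcal{E}_3$ applies with $A^*=\mathrm{pa}^*(j)$. The fit term decreases by at most $K|B|$, while the penalty increases by $\tfrac12(\log k+2\gamma_{\mathrm{EBIC}}\log p)|B|$, which exceeds $K|B|$ once $\log k+\log p>2K$. So the score again strictly increases.

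Finally, $\mathbb{P}(\hat G=G^*)\ge\mathbb{P}(\mathcal{E}_1\cap\mathcal{E}_2\cap\mathcal{E}_3)\ge1-2\eta-o(1)$. Since $\eta$ is arbitrary, $\mathbb{P}(\hat G=G^*)\to1$.

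The main obstacle is the overfitting case, because Lemma~\ref{lem:uniform_score} alone yields only a fit difference of order $k\delta_k\asymp\sqrt{k s_{\max}\log p}$, which can exceed $|B|\log p$. This is exactly why Assumption~\ref{assum:overfit_gain} is needed. A secondary difficulty is that the $O_p(1)$ bound in that assumption must hold uniformly over all $j$ and all equal-risk supersets, which is how it is stated, so it can be invoked directly. We also need to check that the $\mathcal{E}_2$ bound is taken over $\mathcal{A}_j^{c_3}$ and not merely over subsets of $\mathrm{nb}(j;\hat E_{\mathrm{skel}})$; on $\mathcal{E}_1$ the two classes coincide as far as the competitors are concerned.
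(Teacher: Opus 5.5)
Your proposal is correct and follows essentially the same route as the paper's proof. You restrict to the sure-screening event so that $G^*$ is feasible, then compare nodewise: condition (i) plus Lemma~\ref{lem:uniform_score} gives an $\Omega(k)$ fit increase that beats the $O\bigl((\log k+\log p)s_{\max}\bigr)$ penalty change, and Assumption~\ref{assum:overfit_gain} is dominated by the $\tfrac12(\log k+2\gamma_{\mathrm{EBIC}}\log p)|B|$ penalty for supersets. Your differences are minor. You use $\log x-\log y\ge(x-y)/x$ where the paper uses a mean-value expansion, and you fix $\eta$ to turn the $O_p$ statements into explicit uniform events, which makes simultaneity over all $G\neq G^*$ cleaner than the paper's per-node ``with probability tending to one'' phrasing. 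Like the paper, you implicitly assume that parent sets in $\mathcal{G}(\hat E_{\mathrm{skel}})$ have size at most $s_{\max}$.
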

The proof is deferred to Appendix~\ref{app:proof_orientation_consistency}.
\subsection{Robustness to Latent Confounding via Proxy Adjustment}
\label{sec:thm_proxy}

We now consider the effect of residual confounding on the score-based orientation step. The skeleton
stage already incorporates the proxy as an unpenalized regressor
(Section~\ref{sec:proxy_adjustment}), and Theorem~\ref{thm:skeleton_consistency} guarantees that
true edges are retained under this adjustment. We therefore focus on the residual effect of
imperfect proxy partialling-out on the orientation step, where the concern is that remaining
confounding-induced dependencies could distort the population score ordering.

\begin{assumption}[Vanishing residual confounding]
\label{assum:proxy_bias}
Let $P$ be the proxy introduced in Section~\ref{sec:proxy_adjustment}. For each node $j$, let
$\tilde Z_j = Z_j - \Pi_P(Z_j)$ denote the tail-domain residual after projecting onto the span of
$P$
(e.g., $\Pi_P(Z_j)=\arg\min_{a}\sum_{t}(Z_{t,j}-aP_t)^2$ on the tail sample).
Let $\beta_{ji}^{\star}$ denote the corresponding population regression coefficient of $\tilde Z_j$
on $\tilde Z_i$
under the tail-limit working model.

Let $\mathcal{E}_{\mathrm{conf}}$ denote pairs $(i,j)$ that are $d$-separated in the true DAG
$\mathcal{G}^\star$ but share the latent common driver $U$ (i.e., purely confounded pairs). We
assume that the residual confounding leakage satisfies
\[
|\beta_{ji}^{\star}|\ \le\ \gamma_{\mathrm{conf}}(u) \quad \text{for all } (i,j) \in
\mathcal{E}_{\mathrm{conf}},
\]
where $\gamma_{\mathrm{conf}}(u) \to 0$ as $u \to \infty$. For high-dimensional consistency, we
further assume that $\gamma_{\mathrm{conf}}(u)$ vanishes at a rate faster than the score-separation
scale, for example $\gamma_{\mathrm{conf}}(u) = o(k^{-1/2})$ or more generally
$o(\Delta_{\mathrm{miss}})$ when the nodewise population risks satisfy the uniform underfitting gap
$\Delta_{\mathrm{miss}}>0$. This ensures that confounding-induced spurious correlations are
asymptotically negligible compared to genuine causal signals.
\end{assumption}

Assumption~\ref{assum:proxy_bias} is intended to capture a low-dimensional common-shock mechanism
rather than an arbitrary proxy choice. It is plausible when the latent driver enters many nodes
through a shared monotone factor and the proxy is itself a tail-informative surrogate for that
factor, so that partialling out $P$ removes most of the common variation while leaving local
propagation structure in the residuals \citep{PascheEtAl2023}. The assumption does not require the
proxy to identify the latent driver exactly; rather, it requires the proxy to absorb enough of the
common tail component that residual pairwise dependence among purely confounded nodes becomes
asymptotically negligible relative to the orientation signal. Concrete proxy constructions
satisfying this requirement are discussed in Section~\ref{sec:applications}. Under
Assumption~\ref{assum:proxy_bias}, residual confounding does not overturn the population score
ordering: the conditions of Theorem~\ref{thm:orientation_consistency} remain satisfied, and the
consistency guarantee continues to hold.

\section{Simulation Study}
\label{sec:simulation}

\subsection{Simulation Setup}

We assess the finite-sample performance of the proposed framework under recursive max-linear
structural equation models with heavy-tailed noise and latent confounding. The simulation study is
organized around two questions, namely how performance changes with increasing dimension and how sensitive
the procedure is to increasing confounding intensity. In each setting, results are summarized over
50 Monte Carlo replicates and evaluated through skeleton recovery, edge orientation, and overall DAG
recovery.

For each replicate, we generate a directed acyclic graph $G=(V,E)$ from a Barab\'asi--Albert construction \citep{BarabasiAlbert1999}, with the attachment parameter and confounding design varied across experiments. Given
$G$, the observed variables are generated from the recursive max-linear model
\begin{equation}
X_j = \max \left( \bigvee_{i \in \mathrm{pa}(j)} \{c_{ij} X_i\}, \bigvee_{l \in \mathrm{conf}(j)} \{b_{lj} U_l\}, \epsilon_j \right), \qquad j=1, \dots, p,
\end{equation}
where $\mathrm{pa}(j)$ is the parent set of node $j$, $\mathrm{conf}(j)$ indexes latent confounders
acting on node $j$, and the innovations $\epsilon_j$ and latent drivers $U_l$ are independent
standard Fr\'echet variables. The edge coefficients $c_{ij}$ are sampled from $[0.6,0.9]$, and the
confounding coefficients $b_{lj}$ from $[0.5,0.8]$. For each latent driver $U_l$, we additionally
generate a proxy
$P_l = \max(a_l U_l, \eta_l),$
with $a_l \sim U[0.6,0.9]$ and independent Fr\'echet noise $\eta_l$. Here $q$ denotes the number of
latent confounders, and $\mathrm{conf}_s$ denotes the number of observed nodes affected by each
confounder.

After simulating $n$ observations, each margin is transformed to unit Pareto scale by the empirical
rank transform. We then follow the multivariate GPD exceedance rule and retain observations
satisfying $\max_{1\le j\le p} Y_{tj}>u$, where $u$ is defined by the common quantile level
$q_{\mathrm{conf}}=1-n^{0.7}/n.$
Equivalently, we select rows in which at least one margin exceeds its marginal $q_{\mathrm{conf}}$
threshold. Let $k_{\mathrm{exc}}$ denote the realized number of selected exceedances under this
rule; $k_{\mathrm{exc}}$ is data-adaptive and not fixed a priori. We then work with the log-tail
coordinates $Z = \log Y - \log u$ as in Section~\ref{sec:tail_representation}. Depending on the
experiment, we report precision, recall, $F_1$, structural Hamming distance (SHD), and confounding
false positives. Competing methods and experiment-specific parameter settings are introduced in the
corresponding subsections.

\subsection{High-Dimensional Scaling}

We study scaling with fixed sample size $n=1000$ and dimension $p\in\{20,50,100,200\}$. Graphs are
generated from Barab\'asi--Albert models with $m\in\{1,2\}$ and the default generator setting
$q=\lfloor 0.1p\rfloor$, and confounding exposure width is set to $\mathrm{conf}_s=0.2p$ (that is,
$\{4,10,20,40\}$ across dimensions). The tail sample is selected by the exceedance rule above, so the realized
$k_{\mathrm{exc}}$ is data-adaptive. We use $\lambda=C\sqrt{\log p/k_{\mathrm{exc}}}$ with
pre-specified $C=(0.5,1.0,1.5,2.0)$ for $p=(20,50,100,200)$, respectively, and
$\gamma_{\mathrm{EBIC}}=(10,10,10,20)$, with both choices fixed a priori and a larger penalty at
$p=200$ to enforce stronger complexity control in the largest search space.

Four methods are compared. The S3ME skeleton is the proxy-adjusted nodewise $\ell_1$ estimator of
Section~\ref{sec:skeleton}, evaluated on its own as a screening device. S3ME denotes the full
two-stage procedure comprising greedy EBIC-score orientation and pruning on the S3ME skeleton. EASE
\citep{GneccoEtAl2021} is a tail-conditional independence test applied to orient the same recovered
skeleton, so that the comparison isolates the contribution of the orientation score; it uses
$k=\lfloor n^{0.7}\rfloor$ tail observations and the same marginal transformation, with no pruning
step. FGES \citep{Chickering2002} provides a standard body-data baseline using the Gaussian BIC
score on the full sample of $n=1000$ observations; it is run independently with penalty discount
$1.0$ and faithfulness assumed. Figure~\ref{fig:exp1_boxplot} summarizes $F_1$ and SHD, while
Precision and Recall are reported in the Appendix.

\begin{figure}[h]
\centering
\includegraphics[width=0.49\textwidth,trim=0 0 432 360,clip]{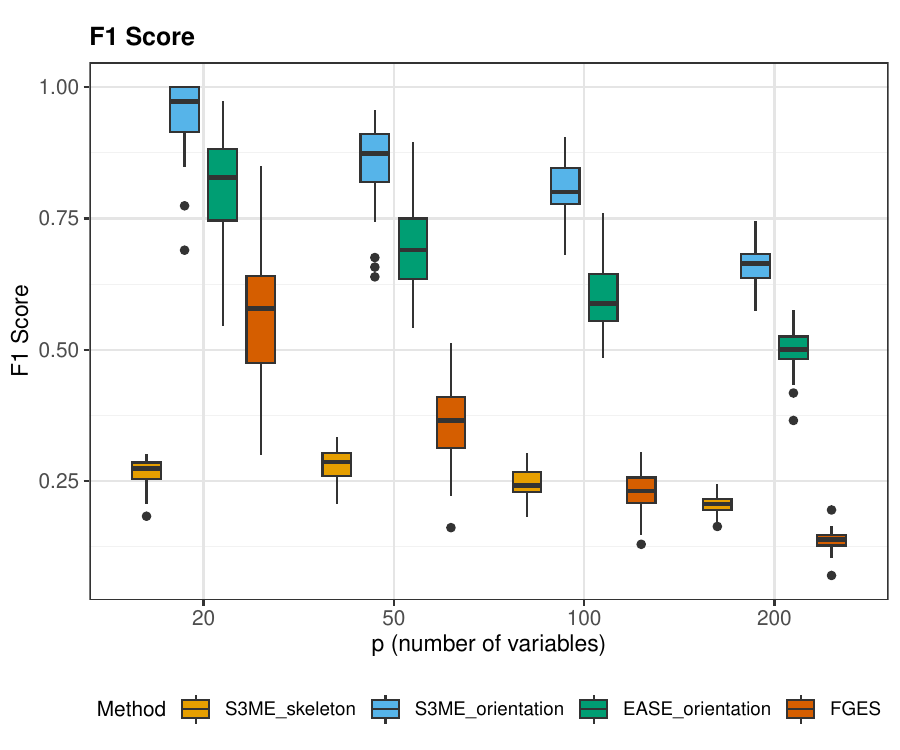}\hfill
\includegraphics[width=0.49\textwidth,trim=432 0 0 360,clip]{Figures/exp1_metrics_boxplot.pdf}
\caption{High-dimensional scaling results ($m=1$, $n=1000$, 50 replicates). Boxplots summarize
$F_1$ (left) and SHD (right) across $p\in\{20,50,100,200\}$ for S3ME skeleton, EASE, FGES, and
S3ME.}
\label{fig:exp1_boxplot}
\end{figure}

All methods degrade as $p$ grows, but the rate of degradation differs markedly. FGES shows a
pronounced deterioration in overall recovery quality, with lower $F_1$ and larger SHD as dimension
increases, because its Gaussian BIC score cannot distinguish tail propagation from body-level
correlation in heavy-tailed data. The S3ME skeleton remains a liberal screening device, so its
$F_1$ stays modest (mean skeleton $F_1$ of $0.27,0.28,0.25,0.21$ for $p=20,50,100,200$) and its
SHD increases with dimension as confounding-induced false positives accumulate. At $p=200$, the
proposed orientation step still outperforms EASE, with mean DAG $F_1$ of $0.66$ for S3ME versus
$0.50$ for EASE, and a correspondingly wider SHD gap in favour of S3ME.

Confounding increasingly dominates skeleton errors in high dimensions, with the confounding share
among skeleton false positives, $\mathrm{ConfFP}_{\mathrm{frac}}$, increasing from $0.08$ at
$p=20$ to $0.92$ at $p=200$. Because EASE orients but does not prune the shared skeleton, its
DAG-stage ConfFP tracks the skeleton-level values exactly. Across $p=20,50,100,200$, DAG-stage ConfFP is
$(0.0,1.2,6.1,16.1)$ for S3ME, compared with $(7.8,92.0,312.5,707.2)$ for EASE and
$(4.1,47.9,245.8,1086.9)$ for FGES. At $p=200$, this corresponds to roughly $44\times$ and
$67\times$ reductions relative to EASE and FGES, respectively. FGES accumulates even more
confounding FPs because it lacks both a tail selection mechanism and a confounding-aware screening
step. This quantitative gap is consistent with the mechanism predicted by
Theorem~\ref{thm:orientation_consistency}, whereby the EBIC penalty discards spurious parents whose
empirical score improvement does not justify the complexity cost, while the tail asymmetry score
ensures that genuinely causal directions are retained.

\subsection{Sensitivity to Confounding Intensity}

This experiment is a controlled stress test that uses the latent-driver signal as proxy input
(oracle-style alignment). We fix $p=100$, set $m=1$, use one latent confounder ($q=1$), and vary
confounding exposure width via $\mathrm{conf}_s\in\{0,0.2p,0.5p,p\}$. All other settings are fixed.

Table~\ref{tab:sim2_combined} compares three methods at the DAG level, namely the full proposed procedure
(S3ME), an ablation that removes the proxy adjustment from both stages (No proxy), and EASE applied
to the S3ME skeleton. For symmetry clarification, S3ME and EASE share the proxy-adjusted skeleton,
while No proxy is an end-to-end ablation using its own no-proxy skeleton. The table reports
directed-stage $F_1$ and confounding-induced false positives (ConfFP).

\begin{spacing}{1}
\begin{table}[h]
\centering
\caption{Confounding sensitivity ($p=100$, $m=1$, 50 replicates). ConfFP denotes
confounding-induced false positives.}
\begin{tabular}{lcccccc}
\toprule
& \multicolumn{2}{c}{S3ME} & \multicolumn{2}{c}{No proxy} & \multicolumn{2}{c}{EASE} \\
\cmidrule(lr){2-3} \cmidrule(lr){4-5} \cmidrule(lr){6-7}
$\mathrm{conf}_s$ & $F_1$ & ConfFP & $F_1$ & ConfFP & $F_1$ & ConfFP \\
\midrule
0      & 0.831 & 0.0 & 0.847 & 0.0  & 0.671 & 0.0   \\
0.2p   & 0.836 & 0.1 & 0.818 & 6.4  & 0.467 & 74.7  \\
0.5p   & 0.837 & 0.5 & 0.767 & 18.9 & 0.294 & 206.9 \\
p      & 0.837 & 1.3 & 0.701 & 39.1 & 0.178 & 402.2 \\
\bottomrule
\end{tabular}
\label{tab:sim2_combined}
\end{table}
\end{spacing}

The pattern is sharp. Without confounding ($\mathrm{conf}_s=0$), the no-proxy ablation is slightly
better ($F_1=0.847$ vs $0.831$). The crossover appears immediately at $\mathrm{conf}_s=0.2p$, where
S3ME overtakes no-proxy ($0.836$ vs $0.818$), and the advantage then widens monotonically with
confounding strength. At $\mathrm{conf}_s=p$, S3ME attains $F_1=0.837$ versus $0.701$ for no-proxy
and $0.178$ for EASE. The same gradient appears in ConfFP, with values $1.3$ (S3ME), $39.1$ (no-proxy), and
$402.2$ (EASE) at $\mathrm{conf}_s=p$. This confirms that the proxy term is not uniformly
beneficial, but is critical for robustness once confounding is present. The ablation further
indicates that the main gain is a cleaner candidate structure, as S3ME keeps far
fewer confounding-induced edges in the final DAG at every confounded setting ($0.1$ vs $6.4$, $0.5$
vs $18.9$, and $1.3$ vs $39.1$), which aligns with its higher $F_1$. We emphasize that this phase uses latent-driver proxy input as a controlled stress test, so the
result should be read as robustness evidence under aligned proxy information rather than as a full
observational-proxy guarantee.

Additional robustness checks are reported in Appendices~\ref{app:threshold_sensitivity} and~\ref{app:graph_robustness}. Appendix~\ref{app:graph_robustness} compares performance
across Erd\H{o}s--R\'enyi and Barab\'asi--Albert graph topologies at matched edge density, showing
that the main conclusions hold for ER graphs while the lower recall observed under BA with $m=2$ is
attributable to hub-induced neighbourhood complexity rather than a failure of the method.
Appendix~\ref{app:threshold_sensitivity} examines sensitivity to the exceedance threshold $\beta$,
confirming that DAG-level performance is stable across $\beta\in\{0.5,\ldots,0.9\}$ and that the
default $\beta=0.7$ lies in a plateau of near-optimal performance.

\section{Applications}\label{sec:applications}

We illustrate the method on two real-data settings with different validation regimes, namely a river
network with a physically interpretable flow structure and a financial network without known ground
truth.

\subsection{Danube River Network}
\label{sec:app1}

We first apply the proposed framework to extreme river discharge data from the upper Danube basin.
We analyze daily summer (June--August) discharge measurements from $n=428$ observations at 31
gauging stations provided by the Bavarian Environmental Agency, restricted to the common observation
period 1960--2010 \citep{AsadiDavisonEngelke2015}. After standard temporal declustering and tail
preprocessing (Appendix~\ref{app:application_preprocessing}), the threshold $q_{\mathrm{conf}}=0.90$ yields $k=117$ tail
observations. This setting is useful because the river
topology, derived from the known catchment structure of the upper Danube
\citep{AsadiDavisonEngelke2015, EngelkeHitz2020}, provides a physically interpretable benchmark for
the direction of extremal propagation.

A central difficulty in this application is the presence of regional meteorological forcing:
large-scale storm systems can drive extreme discharges at multiple stations simultaneously, inducing
strong symmetric tail dependence that obscures the directional propagation along the river network.
To attenuate this common component, we construct a proxy series
$P_t = \sum_{j=1}^{p} \mathbb{I}(Z_{t,j} > 0)$, which records the spatial extent of simultaneous
exceedances across the network and serves as a scalar summary of region-wide meteorological
activity. We then apply the two-stage procedure to the preprocessed data, using
$\lambda = \sqrt{\log(p)/k}$ for skeleton estimation and $\gamma_{\mathrm{EBIC}}=10$ in the
orientation step.

Figure~\ref{fig:danube_network} compares the estimated extremal DAG with the known physical
river-flow graph. Against the 30 true edges in the upstream-to-downstream topology, the method
recovers 24 skeleton edges with precision $0.88$ and recall $0.70$ ($F_1=0.78$). Among the 21
correctly identified edges, $17$ ($0.81$) receive the correct direction and $4$ ($0.19$) are
reversed. The reversed edges occur at tributary confluence points, where the max-linear signal from
the main stem can dominate the tributary contribution and make the direction ambiguous from tail
behaviour alone. The 9 missing edges are predominantly short-range connections along the main stem
and at smaller tributary junctions, where the signal from upstream propagation is weaker relative to
the noise from local catchment variability. The 3 false positive edges, by contrast, connect
geographically distant stations and likely reflect residual co-exceedance not fully absorbed by the
proxy. The proxy adjustment suppresses many dense associations that would otherwise be induced by
region-wide storm events, while retaining the dominant propagation pathways.

\begin{figure}[htbp]
    \centering
    \includegraphics[width=0.98\textwidth]{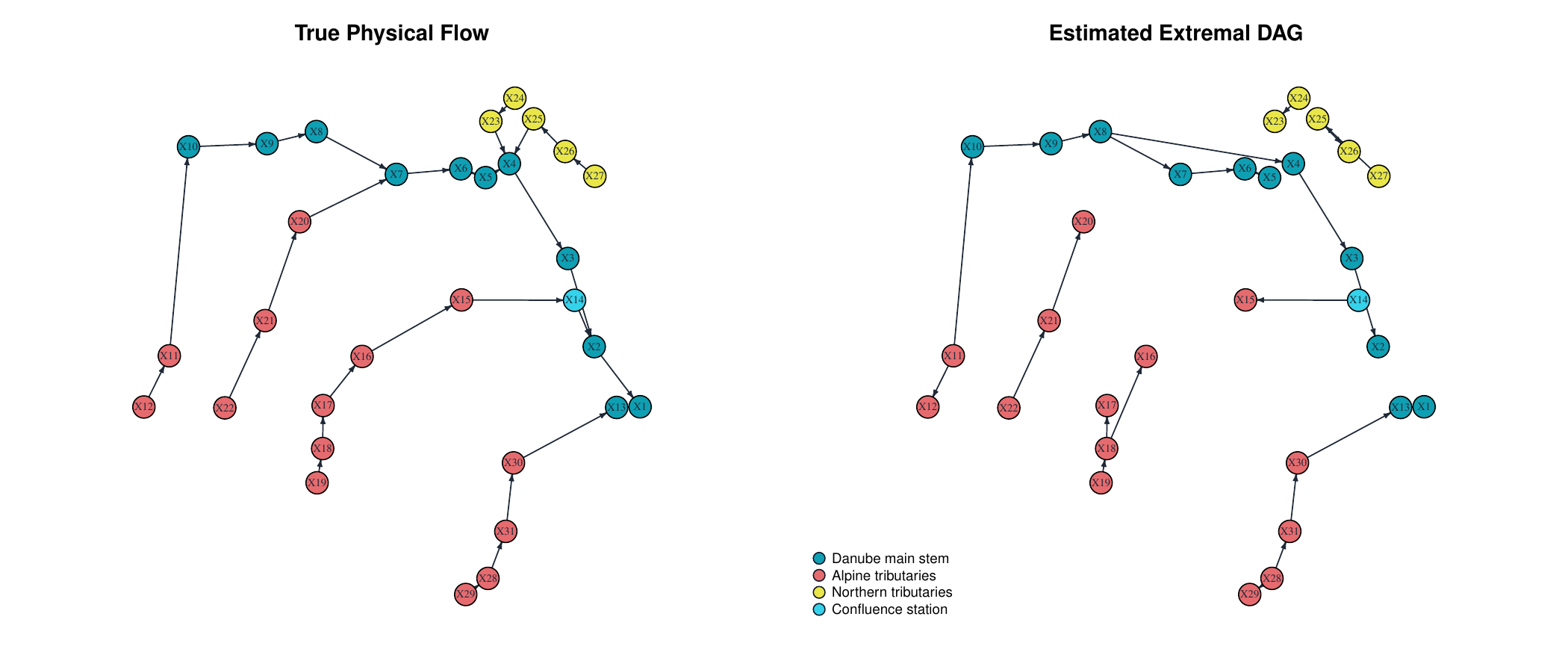}
    \caption{Upper Danube comparison. Left Panel shows the physical river-flow graph, while the
    right panel shows the estimated extremal DAG, both on the same geographic layout. Node colors
    indicate basin groups: Danube main stem (teal), Alpine tributaries (salmon), northern
    tributaries (yellow), and the confluence station (cyan).}
    \label{fig:danube_network}
\end{figure}

\subsection{S\&P 500 Tail Risk Network}
\label{sec:app2}

We next apply the method to daily equity returns from S\&P 500 constituents. The data are obtained from Stooq\footnote{\url{https://stooq.com}} over the period January 2005 to March 2025, yielding approximately $n=5{,}093$ trading
days and $p=103$ stocks spanning all eleven GICS sectors. Loss variables are defined by $L_{t,j}=-r_{t,j}$,
where $r_{t,j}$ denotes the log-return of stock $j$ on day $t$.

A main difficulty in this application is the presence of strong market-wide tail dependence during
periods of systemic stress. To attenuate this common component, we construct a three-dimensional
proxy
$\bm{S}_t = (\mathrm{VIX}_t,\,-r^{\mathrm{SP500}}_t,\,\Delta \mathrm{DGS10}_t)^\top,$
where $\mathrm{VIX}_t$ captures implied market volatility (a forward-looking measure of expected
tail risk), $-r^{\mathrm{SP500}}_t$ captures broad equity drawdowns (the contemporaneous market-wide
loss), and $\Delta \mathrm{DGS10}_t$ captures daily changes in the 10-year Treasury yield (a proxy
for flight-to-quality shifts that can induce simultaneous tail dependence across equity sectors).
We retain $k=4{,}120$ tail observations using the adaptive threshold
$q_{\mathrm{conf}} = 1 - n^{0.7}/n$; see Appendix~\ref{app:application_preprocessing} for data cleaning, preprocessing,
and tuning details. The skeleton is then estimated by proxy-adjusted nodewise extremal LASSO with
$\lambda = c\sqrt{\log(p+1)/k}$, using $c=5$, and edges are oriented with
$\gamma_{\mathrm{EBIC}}=10$ and maximum in-degree 10. The resulting graph contains 113 directed
edges among 103 nodes.

Figure~\ref{fig:sp500_dag} shows the estimated network. The graph exhibits clear within-sector
structure together with a smaller number of cross-sector transmission pathways. Utilities form the
most prominent cluster, with AEP (out-degree 5) and XEL (out-degree 5) appearing as major
transmitters and several downstream links concentrated within the same sector. Real-estate names
occupy a more intermediate position, connecting the Utilities cluster to parts of the Financial and
Consumer sectors. The estimated graph also identifies a technology subnetwork involving KLAC, AMAT,
LRCX, and NVDA, which is consistent with concentrated dependence within semiconductor-related supply
chains. More broadly, the hub structure, with a small number of high out-degree nodes concentrated
in Utilities, Financials, and Materials, suggests that tail risk propagation during market stress is
channelled through a few sector-specific transmitters rather than diffusing uniformly across the
market. These patterns should not be interpreted as definitive ground-truth causal claims in a
market without known validation structure. Rather, the main value of the application is that the
estimated graph is sparse, sectorally interpretable, and qualitatively aligned with plausible
channels of tail risk propagation under common market stress.

\begin{figure}[H]
    \centering
    \includegraphics[width=0.9\textwidth,trim=70 120 70 40,clip]{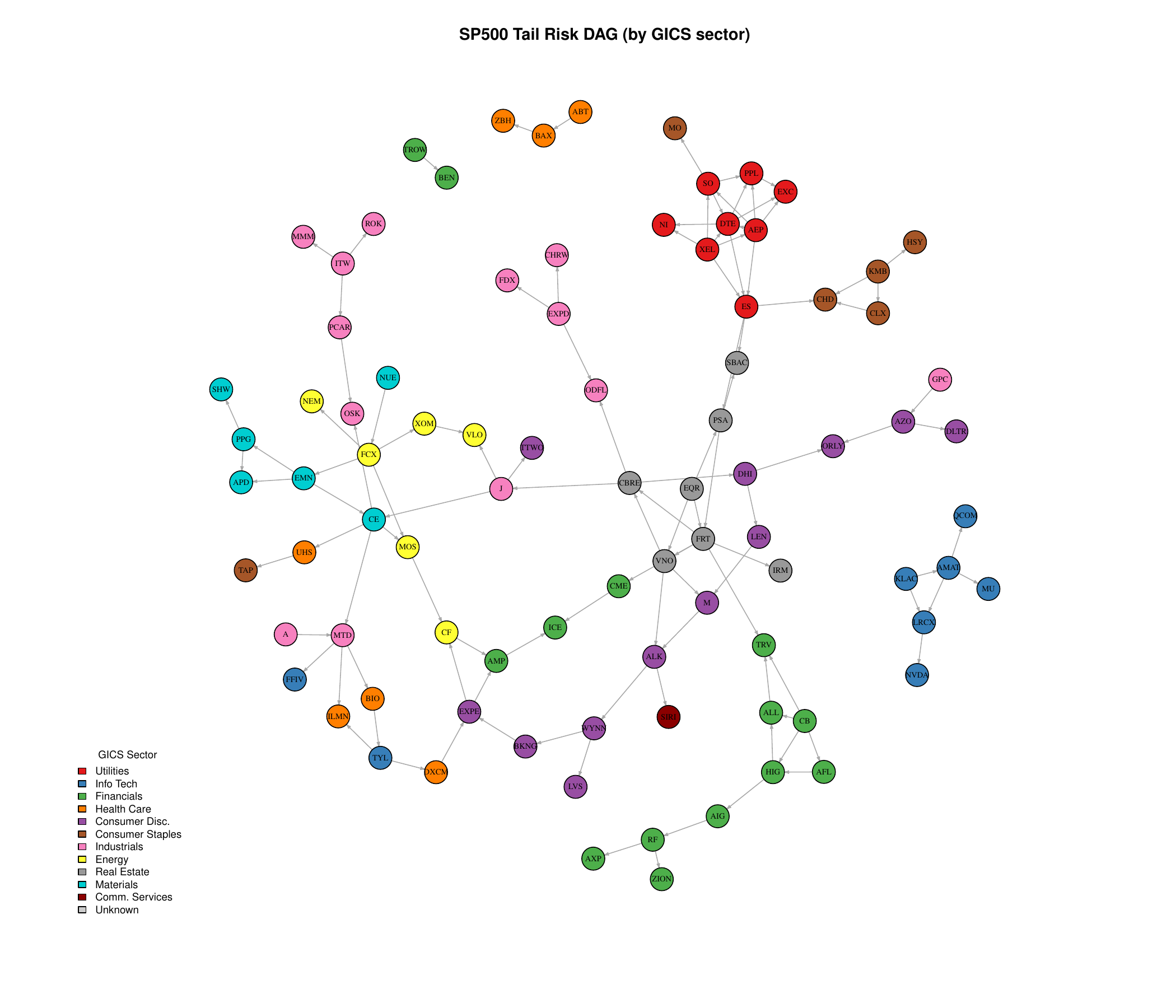}
    \caption{Estimated tail risk network among S\&P 500 constituents. Nodes are colored by GICS
    sector, and directed edges indicate inferred directions of tail risk propagation.}
    \label{fig:sp500_dag}
\end{figure}

\section{Conclusion}
\label{sec:discussion}

We have proposed a two-stage procedure for causal discovery in multivariate extremes, based on
tail-induced asymmetry. The first stage screens for a sparse candidate skeleton with proxy-adjusted
extremal nodewise regression. The second stage orients candidate edges by score minimisation under a
max-linear envelope with $\ell_1$ loss. The objective is not to recover bulk dependence, but to
extract directional information from the tail regime.

The evidence is consistent across theory, simulation, and data analysis. At population level,
Theorem~\ref{thm:asymmetry_identifiability} establishes forward--backward separation of tail
prediction error in a canonical bivariate max-linear setting. In finite samples, the simulation
results in Section~\ref{sec:simulation} show that proxy adjustment reduces false positives under
latent common shocks while retaining useful directional recovery. In the Danube application, where a
physical flow topology is available, 17 of 21 correctly identified skeleton edges are oriented
correctly (81\%). In the S\&P 500 analysis, where no ground-truth DAG is available, the estimated
network is interpreted as a sparse and sectorally coherent representation of tail-risk transmission,
rather than as validated causal truth.

The current guarantees are deliberately scoped. The identifiability argument is developed for a
bivariate max-linear model. The high-dimensional results rely on a working H\"usler--Reiss
approximation and a population score-separation condition. The superset non-improvement property is
verified for the max-linear subclass (Proposition~B.1, Appendix~\ref{app:proof_multivariate_identifiability}), not yet for the
full class in Assumption~\ref{assum:recursive_sem}. Moreover,
Theorem~\ref{thm:orientation_consistency} concerns an oracle global minimiser, whereas the
implemented orientation step is greedy. This theory--algorithm gap remains open. Proxy adjustment
should therefore be viewed as a structural safeguard against common-shock dependence, not as a full
correction for hidden confounding.

In terms of future work, directions appear most immediate. First, uncertainty quantification for
learned edges (for example, bootstrap stability or tail-focused subsampling) would improve practical
interpretability. Second, computational refinements are needed when the candidate skeleton is dense
and orientation search becomes costly. Third, extending orientation guarantees beyond the max-linear
envelope, together with richer domain-specific proxy design, would broaden applicability.

\appendix

\noindent\textbf{Additional high-dimensional scaling metrics.}
For completeness, Figure~\ref{fig:exp1_precision_recall_supp} reports the corresponding Precision
and Recall boxplots for the high-dimensional scaling experiment from Section~5.1 of the main paper.
These metrics complement the main-text $F_1$ and SHD summaries by showing how screening sensitivity
and positive predictive value change with dimension across methods.

\begin{figure}[htbp]
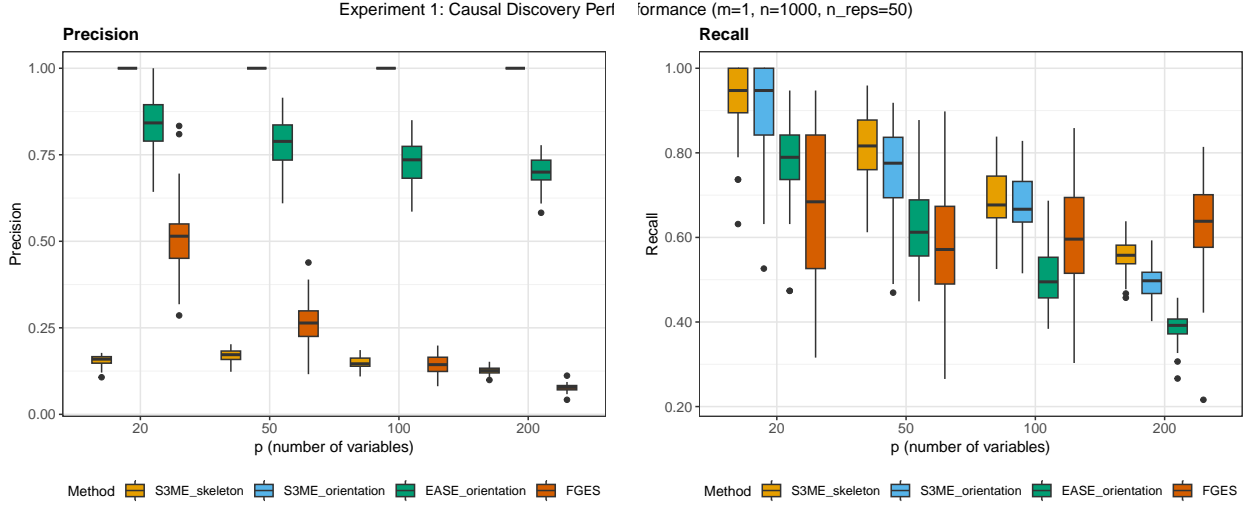

\centering
\includegraphics[width=0.49\textwidth,trim=0 360 432 0,clip]{Figures/exp1_metrics_boxplot.pdf}\hfill
\includegraphics[width=0.49\textwidth,trim=432 360 0 0,clip]{Figures/exp1_metrics_boxplot.pdf}
\caption{High-dimensional scaling results ($m=1$, $n=1000$, 50 replicates). Boxplots summarize
Precision (left) and Recall (right) across $p\in\{20,50,100,200\}$ for S3ME skeleton, EASE, FGES,
and S3ME.}
\label{fig:exp1_precision_recall_supp}
\end{figure}

% =============================================================
\section{Proof of Theorem~1 (Tail Asymmetry Identifiability)}
\label{app:proof_asymmetry_identifiability}
% =============================================================

\begin{proof}
We prove the forward and backward bounds separately. {Throughout the proof, $Y_j = 1/(1-F_j(X_j))$ denotes the Pareto-standardised variable, $Z_j = \log Y_j - \log u$ the log-tail coordinate, and $\tilde u$ the $X$-scale threshold corresponding to $u$ via the probability integral transform, so that $\{X_j > \tilde u\} \equiv \{Y_j > u\} \equiv \{Z_j > 0\}$. All conditioning events are equivalent up to monotone transformations.}

\paragraph{Forward direction ($1\to 2$).}
Condition on the tail event $\{Z_1>0\}$, which is equivalent to $\{Y_1>u\}$ and hence to
$\{X_1>\tilde u\}$ for a corresponding threshold $\tilde u\to\infty$ under the exact probability
integral transform. Under the max-linear specification,
\[
X_2=\max(cX_1,\epsilon_2).
\]
Since $\epsilon_2$ is independent of $X_1$ and $\epsilon_2\sim \text{Fr\'echet}(1)$, we have, for
$\tilde u$ sufficiently large,
\begin{align}
\mathbb{P}(\epsilon_2>cX_1\mid X_1>\tilde u)
&=\mathbb{E}\!\left[\mathbb{P}(\epsilon_2>cX_1\mid X_1)\,\big|\,X_1>\tilde u\right] \nonumber\\
&=\mathbb{E}\!\left[1-\exp\!\left(-\frac{1}{cX_1}\right)\Big|\,X_1>\tilde u\right] \nonumber\\
&\le \mathbb{E}\!\left[\frac{1}{cX_1}\Big|\,X_1>\tilde u\right]
\le \frac{1}{c\tilde u}\ \longrightarrow\ 0,
\end{align}
as $\tilde u\to\infty$. Therefore,
\[
\mathbb{P}(X_2=cX_1\mid X_1>\tilde u)\ \to\ 1.
\]
Since $\mathbb{P}(X_2=cX_1\mid X_1>\tilde u)\to 1$, we have $X_2/X_1 \to c$ in probability on
$\{X_1>\tilde u\}$. It remains to track the effect of the Pareto standardization $Y_j =
1/(1-F_j(X_j))$. For $X_1\sim\text{Fr\'echet}(1)$, $1-F_1(x)\sim 1/x$ as $x\to\infty$, so $Y_1 \sim
X_1$ uniformly on $\{X_1>\tilde u\}$. For $X_2=\max(cX_1,\epsilon_2)$, $\mathbb{P}(X_2>x)\sim
(c+1)/x$ and hence $1-F_2(x)\sim (c+1)/x$, so $Y_2 \sim X_2/(c+1)$ uniformly on $\{X_2>\tilde u\}$.
Combining these with $\mathbb{P}(X_2=cX_1\mid X_1>\tilde u)\to 1$ and by standard arguments for
conditional convergence in probability,
\[
\frac{Y_2}{Y_1}\ \to\ \frac{c}{c+1} \;=:\; \kappa
\qquad \text{in probability given } X_1>\tilde u.
\]
The log-tail coordinates $Z_j = \log Y_j - \log u$ then satisfy
\[
Z_2 - Z_1 \to \log \kappa
\quad \text{in probability given } Z_1 > 0.
\]
Choosing any finite $c_0<\log\kappa$ and the predictor
$\hat Z_2 = \max(Z_1 + \log\kappa,\, c_0) = \max(Z_1,\, c_0-\log\kappa) + \log\kappa$
yields $\hat Z_2 \in \mathcal H$ and $\hat Z_2 = Z_1 + \log\kappa$ on the event $\{Z_1>0\}$. Hence,
\[
\mathbb{E}\!\left[|Z_2-\hat Z_2|\mid Z_1>0\right]\ \to\ 0,
\]
so the optimal tail prediction risk satisfies $R^\star_{2\mid 1}(u) \to 0$ as $u \to \infty$.

\paragraph{Backward direction ($2\to 1$).}
Now condition on $\{X_2>\tilde u\}$ for $\tilde u\to\infty$. The exceedance $\{X_2>\tilde u\}$ can
occur via two asymptotically distinct mechanisms:

(i) propagation from an upstream extreme $cX_1$; or (ii) a local extreme innovation $\epsilon_2$
dominating $cX_1$.

Specifically, since $X_1$ and $\epsilon_2$ are independent Fr\'echet$(1)$ variables,
\[
\mathbb{P}(\epsilon_2 > u) \sim u^{-1}, \qquad \mathbb{P}(cX_1 > u) = \mathbb{P}(X_1 > u/c) \sim
c\,u^{-1}.
\]
By independence, the intersection term is $o(u^{-1})$, so
\[
\mathbb{P}(X_2 > u) = \mathbb{P}(\max(cX_1, \epsilon_2) > u) \sim (c+1)\,u^{-1}.
\]
Meanwhile, on the event $\{\epsilon_2 > cX_1,\; X_2 > u\}$, we have $X_2 = \epsilon_2$, and
\[
\mathbb{P}(\epsilon_2 > cX_1,\; X_2 > u) = \mathbb{P}(\epsilon_2 > u,\; \epsilon_2 > cX_1) \sim
\mathbb{P}(\epsilon_2 > u) \sim u^{-1}.
\]

Write $A_u := \{X_2 > \tilde u\}$ and $B := \{\epsilon_2 > cX_1\}$. The calculation above gives
\begin{equation}
\lim_{u\to\infty}\mathbb{P}(B\mid A_u)
= \frac{1}{c+1}.
\end{equation}

Since node~1 is a source, $X_1=\epsilon_1$, so on $A_u \cap B$ we have $X_2=\epsilon_2$ and
$\epsilon_1 < \epsilon_2/c$. Any predictor $h\in\mathcal H$ can be written as
$h(z)=\max(z,\gamma)+d$; equivalently, setting $c_{2\to 1}=d$ and $c_0'=\gamma+d$ gives the
reparametrization $h(z)=\max(z+c_{2\to 1},\,c_0')$, which we use below. For any max-linear envelope
predictor $\hat Z_1 = \max(Z_2 + c_{2 \to 1}, c_0')$, by the law of total expectation over $B$ and
$B^c$ within $A_u$, and dropping the non-negative $B^c$ term, we have $\mathbb{E}[|Z_1-\hat Z_1|\mid
A_u]\ge \mathbb{P}(B\mid A_u)\,\mathbb{E}[|Z_1-\hat Z_1|\mid A_u, B]$ for every predictor. Since
$\mathbb{P}(B\mid A_u)$ is free of the predictor parameters, taking the infimum over $(c_{2\to
1},c_0')$ on both sides yields
\[
\inf_{c_{2\to 1},\,c_0'}\mathbb{E}\!\left[|Z_1-\hat Z_1|\mid A_u\right]
\;\ge\; \mathbb{P}(B\mid A_u)\,\inf_{c_{2\to 1},\,c_0'}\mathbb{E}\!\left[|Z_1-\max(Z_2+c_{2\to
1},c_0')|\mid A_u, B\right].
\]

We now lower-bound the inner infimum on $A_u \cap B$. On this event, $X_2 = \epsilon_2$ and $X_1 =
\epsilon_1$ with $\epsilon_1 \perp \epsilon_2$. Given $\epsilon_2 = y$ and $\epsilon_2 >
c\epsilon_1$, the independence of $\epsilon_1$ and $\epsilon_2$ yields, for $x < y/c$,
\[
\Pr(\epsilon_1 \le x \mid \epsilon_2 = y,\; \epsilon_2 > c\epsilon_1)
= \frac{\Pr(\epsilon_1 \le x,\; \epsilon_1 < y/c)}{\Pr(\epsilon_1 < y/c)}
= \frac{F_1(x)}{F_1(y/c)}.
\]
Since $y \asymp u e^z \to \infty$ on $\{Z_2 = z,\, A_u \cap B\}$, we have $F_1(y/c) \to 1$, and
hence for every fixed $x$,
\[
\Pr(\epsilon_1 \le x \mid Z_2 = z,\, A_u \cap B) \to F_1(x).
\]
That is, the conditional distribution of $\epsilon_1$ given $(Z_2, A_u, B)$ converges pointwise to
the unconditional $\mathrm{Fr\acute{e}chet}(1)${, for each fixed $x$, as $u\to\infty$}.

Since $Y_1 = 1/(1-F_1(\epsilon_1))$ is the exact probability-integral-transform of $\epsilon_1$ to
unit Pareto, $\log Y_1 \sim \mathrm{Exp}(1)$ exactly, and $Z_1 = \log Y_1 - \log u$. By the
convergence above, for any $h\in\mathcal{H}$ and any $z$,
\[
\mathbb{E}\!\left[|Z_1 - h(z)|\mid Z_2 = z,\, A_u \cap B\right]
= \mathbb{E}\!\left[|\log Y_1 - \log u - h(z)|\right] + o(1)
\;\ge\; \mathbb{E}\!\left[|\log Y_1 - \log 2|\right] + o(1),
\]
where the inequality uses the fact that $\log 2 - \log u$ is the median of $Z_1 = \log Y_1 - \log
u$, and the median minimizes the $\ell_1$ risk. Since $\mathbb{E}[|\mathrm{Exp}(1) - \log 2|] = \log
2$, we have
\[
\liminf_{u\to\infty}\inf_{h\in\mathcal{H}}\mathbb{E}\!\left[|Z_1-h(Z_2)|\mid A_u \cap B\right]
\;\ge\; \log 2 \;>\; 0.
\]
Combining the two bounds gives
\[
\liminf_{u\to\infty} R^\star_{1\mid 2}(u)
\ge \left(\lim_{u\to\infty}\mathbb{P}(B\mid A_u)\right)
\left(\liminf_{u\to\infty}\inf_{h\in\mathcal{H}}\mathbb{E}\!\left[|Z_1-h(Z_2)|\mid A_u \cap
B\right]\right)
\ge \frac{\log 2}{c+1} > 0.
\]

\paragraph{From tail risk separation to identifiability.}
Finally, for the $\ell_1$-based max-linear envelope scoring criterion, the nodewise contribution is
monotone in the (tail-weighted) absolute prediction error. Since the forward tail risk converges to
zero while the reverse tail risk has a strictly positive $\liminf$ as $u \to \infty$, any monotone
$\ell_1$-based score evaluated at a sufficiently large threshold $u$ strictly prefers the true
direction $1\to 2$ over $2\to 1$, yielding asymptotic identifiability.
\end{proof}

% =============================================================
\section{Proof of Theorem~2 (Nodewise Identifiability)}
\label{app:proof_multivariate_identifiability}
% =============================================================

This section provides the technical verification that the abstract population-separation conditions
of Section~4 are satisfied by the specific EBIC-based score used in our procedure.

\begin{proposition}[Superset non-improvement under the max-linear benchmark]
\label{prop:penalty_domination}
Under the fully observed recursive max-linear specification of Assumption~3
with $f_j(\{x_m\}) = \max_{m\in\mathrm{pa}(j)} a_{mj}x_m$, unit Fr\'echet innovations, and no latent
confounders, for any fixed node $j$ and any superset $A\supseteq \mathrm{pa}^*(j)$,
\[
R^*_{j\mid A}(u)=R^*_{j\mid\mathrm{pa}^*(j)}(u).
\]
In particular, adding spurious parents does not improve population tail risk.
\end{proposition}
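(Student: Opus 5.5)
The plan is to prove the equality as two inequalities, $R^*_{j\mid A}(u)\le R^*_{j\mid\mathrm{pa}^*(j)}(u)$ and $R^*_{j\mid A}(u)\ge R^*_{j\mid\mathrm{pa}^*(j)}(u)$, at each fixed $u$. Both rest on the structural equation $X_j=\max\bigl(\max_{m\in\mathrm{pa}^*(j)}a_{mj}X_m,\epsilon_j\bigr)$ and on the explicit form of $\mathcal H_A$. Write $B=A\setminus\mathrm{pa}^*(j)$. The conditioning events for the two risks differ: $\{\max_{m\in A}Z_m>0\}$ versus $\{\max_{m\in\mathrm{pa}^*(j)}Z_m>0\}$. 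So the first step is to decompose the larger event as
\[
\{\max_{m\in A}Z_m>0\}=\{\max_{m\in\mathrm{pa}^*(j)}Z_m>0\}\ \cup\ \{\max_{m\in\mathrm{pa}^*(j)}Z_m\le 0<\max_{m\in B}Z_m\},
\]
and to write $R^*_{j\mid A}$ as a mixture of conditional $\ell_1$ risks on these two pieces. All subsequent estimates are then done piece by piece.

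\emph{Upper bound (embedding).} Every $h\in\mathcal H_{\mathrm{pa}^*(j)}$ is the pointwise limit of elements of $\mathcal H_A$: keep $c_0$ and the parent offsets, and send $c_{m\to j}\to-\infty$ for $m\in B$. On the first piece the resulting limit is exactly the parent-only envelope. Monotone convergence of $|Z_j-h_A(Z_A)|$ (the envelope decreases to its limit) then shows that $\inf_{\mathcal H_A}$ is at most the parent-only risk on that piece.

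On the second piece, all parents are sub-threshold. I would show that the conditional law of $Z_j$ there is the same as its law on the corresponding sub-threshold part of the parent event. In that regime $X_j$ is driven by $\epsilon_j$ and by bounded parent contributions. Moreover, the spurious $Z_B$ carries no information on $\epsilon_j$ given $Z_{\mathrm{pa}^*(j)}$: under the recursive max-linear SEM with no latent confounders, $\epsilon_j$ is independent of all non-descendants, and in particular of $X_B$ once $X_{\mathrm{pa}^*(j)}$ is fixed. The constant $c_0$ is optimal there, namely the conditional median of $Z_j$, and it contributes identically to both risks. Together these give the inequality $\le$.

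\emph{Lower bound (no information from spurious variables).} For $A\supseteq\mathrm{pa}^*(j)$ with $B$ consisting of non-descendants of $j$, the local Markov property of the recursive SEM gives $Z_j\perp Z_B\mid Z_{\mathrm{pa}^*(j)}$. Conditional on $Z_A$, the best $\ell_1$ predictor of $Z_j$ is therefore a function of $Z_{\mathrm{pa}^*(j)}$ alone. Any $h_A\in\mathcal H_A$ then has risk at least that of the Bayes rule restricted to parent coordinates. It remains to check that this Bayes rule's risk is attained, or at least approached, within $\mathcal H_{\mathrm{pa}^*(j)}$, because the max-linear form of $X_j$ makes the conditional median a max-linear envelope of the parents. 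For this I would compute the conditional median of $\log\max(a_{mj}X_m,\epsilon_j)$ given the parents, using the Fréchet law of $\epsilon_j$. Its form is $\max(\max_m(Z_m+c_{m\to j}),c_0)$ up to the marginal standardization $Y_j=1/(1-F_j(X_j))$, which is monotone and hence commutes with medians.

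\emph{Main obstacle.} Two points are delicate. The first is the case where $B$ contains children of $j$, since $A\subseteq\mathrm{nb}^*(j)$ allows this. There conditional independence fails: a child $X_c=\max(a_{jc}X_j,\dots)$ genuinely carries information about $X_j$. For that case I would first try to exploit the direction of the envelope. The term $Z_c+c_{c\to j}$ can only raise the predictor, and it is large exactly when $X_c$ is large, which may be caused by $\epsilon_c$ or by other parents of $c$. One would show, by the same two-mechanism decomposition used in the backward part of Theorem~\ref{thm:asymmetry_identifiability}, that any finite $c_{c\to j}$ strictly worsens the $\ell_1$ risk on the $\epsilon_c$-dominated event. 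The optimum is then at $c_{c\to j}\to-\infty$, which reduces the problem to the embedding case. The second delicate point is verifying that the sub-threshold piece contributes identically at fixed $u$, rather than only asymptotically; I expect this to require an explicit Fréchet computation of conditional medians.
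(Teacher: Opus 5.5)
\textbf{The step that fails is your treatment of the second piece} $\{\max_{m\in\mathrm{pa}^*(j)}Z_m\le 0<\max_{m\in B}Z_m\}$.

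You were right that $R^*_{j\mid A}(u)$ and $R^*_{j\mid\mathrm{pa}^*(j)}(u)$ condition on different events. But this piece does not occur in $R^*_{j\mid\mathrm{pa}^*(j)}(u)$ at all. On the parent event some parent exceeds $0$, so there is no ``corresponding sub-threshold part'' for it to match, and the piece contributes to $R^*_{j\mid A}(u)$ only. Also, $c_0$ and the parent offsets are shared by both pieces, so you cannot place $c_0$ at a conditional median on piece two independently. With $\pi_2$ the conditional probability of piece two, $R^*_{j\mid A}(u)$ is an infimum of $(1-\pi_2)\times(\text{risk on the parent event})+\pi_2\times(\text{risk on piece two})$. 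Your $\le$ direction therefore needs piece two to be predicted at least as well as the parent event. It is not, even as $u\to\infty$. With $R^*_{j\mid A}$ conditioned on the $A$-dependent event $\{\max_{m\in A}Z_m>0\}$, the claimed equality is false, so this step cannot be closed.

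Take $X_1=\epsilon_1$, $X_2=\max(cX_1,\epsilon_2)$, $X_3=\epsilon_3$, $j=2$, $A=\{1,3\}$. Then $\pi_2=(1-1/u)/(2-1/u)\to 1/2$. On $\{Z_1\le 0<Z_3\}$, $Z_3$ is independent of $(Z_1,Z_2)$, so every $h_A\in\mathcal H_A$ has conditional risk at least $\mathbb{E}\bigl[\inf_t\mathbb{E}(|\log Y_2-t|\mid Y_1)\mid Y_1\le u\bigr]$. This tends to a positive constant because $X_2$ given $X_1$ is non-degenerate. Meanwhile $R^*_{2\mid 1}(u)\to 0$ by Theorem~\ref{thm:asymmetry_identifiability}, so $\liminf_{u\to\infty}R^*_{2\mid\{1,3\}}(u)>0=\lim_{u\to\infty}R^*_{2\mid 1}(u)$.

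Restricting to $A\subseteq\mathrm{nb}^*(j)$ does not help. If $B$ contains a child, say $X_3=\max(bX_2,\epsilon_3)$, piece two mixes $\epsilon_2$-driven exceedances with $Z_2=O_p(1)$ and $\epsilon_3$-driven ones with $Z_2\approx-\log u$. The pair $(Z_1,Z_3)$ cannot tell these apart, so $R^*_{2\mid\{1,3\}}(u)$ diverges at rate $\log u$ for every choice of offsets, and sending $c_{3\to 2}\to-\infty$ does not rescue it. Your decomposition is the right diagnostic, but it refutes the equality rather than proving it.

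\textbf{Comparison with the paper's proof.} The paper never meets this obstacle because it works neither with $\mathcal H_A$ nor with the $A$-dependent event. It compares the log-sum-exp surrogates $\log(T_0+T_B(c_B))$ and $\log T_0$, where $T_0$ contains the true weights and the unobserved $\epsilon_j$. Both surrogates therefore lie above $L_j$ pointwise, the $B$-terms can only enlarge a one-sided error, and $c_B\to-\infty$ gives the reverse inequality by monotone convergence. That comparison tacitly fixes a common conditioning event and uses a predictor outside $\mathcal H_A$. So it does not establish the statement for $R^*_{j\mid A}$ as defined either; your route at least exposes the issue.

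\textbf{What a provable version looks like.} Condition every $A\supseteq\mathrm{pa}^*(j)$ on one event, e.g.\ $\{\max_{m\in\mathrm{pa}^*(j)}Z_m>0\}$. There your embedding step works, but via dominated rather than monotone convergence: an envelope can undershoot $Z_j$, so $|Z_j-h|$ is not monotone as $h$ decreases. Both risks then tend to $0$, giving equality as $u\to\infty$. Exact equality at fixed $u$ would still need your Bayes-rule step, which fails as stated. The conditional median of $Z_j$ given the parents is $g_j$ applied to a max-linear function of the $X_m$, with $g_j(x)=-\log(1-F_j(x))-\log u$. For Fr\'echet margins, $g_j(a_{mj}X_m)$ equals $Z_m+\mathrm{const}$ only asymptotically, so this median does not lie in $\mathcal H_{\mathrm{pa}^*(j)}$.
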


\begin{proof}
Fix node $j$ and write $A=\mathrm{pa}^*(j)\cup B$ with $B\cap\mathrm{pa}^*(j)=\emptyset$. Under the
recursive max-linear model and Proposition~\ref{prop:bounded_approx_error}, compare the log-sum-exp
proxies under the true parent set and superset:
\[
T_0 := \sum_{m\in\mathrm{pa}^*(j)} a_{mj}e^{L_m} + \epsilon_j,
\qquad
T_B(c_B) := \sum_{m\in B} e^{c_{m\to j}+L_m},
\]
\[
\tilde L_j^{A}(c_B)=\log\!\big(T_0+T_B(c_B)\big),
\qquad
\tilde L_j^{\mathrm{pa}}=\log(T_0).
\]
Since $T_B(c_B)\ge0$, we have $\tilde L_j^{A}(c_B)\ge \tilde L_j^{\mathrm{pa}}\ge L_j$ pointwise, so
\[
|L_j-\tilde L_j^{A}(c_B)|\ge |L_j-\tilde L_j^{\mathrm{pa}}|.
\]
Taking expectations and infimum over finite $c_B$ yields
\[
R^*_{j\mid A}(u)\ge R^*_{j\mid\mathrm{pa}^*(j)}(u).
\]
Conversely, set $c_{m\to j}^{(r)}=-r$ for $m\in B$. Then $T_B(c_B^{(r)})\downarrow0$ and $\tilde
L_j^{A}(c_B^{(r)})\downarrow\tilde L_j^{\mathrm{pa}}$ pointwise as $r\to\infty$. By monotone
convergence of nonnegative errors,
\[
\lim_{r\to\infty} R_{j\mid A}(u;c_B^{(r)}) = R^*_{j\mid\mathrm{pa}^*(j)}(u),
\]
which implies $R^*_{j\mid A}(u)\le R^*_{j\mid\mathrm{pa}^*(j)}(u)$. Therefore equality holds.
\end{proof}

\begin{proof}[Proof of Theorem~2]
Fix $A\subseteq \mathrm{nb}^*(j)$ with $A\neq \mathrm{pa}^*(j)$. There are two cases.

\textit{Case 1: $A\not\supseteq \mathrm{pa}^*(j)$.} By Assumption~5, there
exist constants $\Delta_{\mathrm{miss},j}>0$ and $u_{0,j}$ such that for all $u\ge u_{0,j}$,
\[
R^*_{j\mid A}(u)-R^*_{j\mid\mathrm{pa}^*(j)}(u)\ge \Delta_{\mathrm{miss},j}>0.
\]
Hence any parent set omitting at least one true parent has strictly larger population risk.

\textit{Case 2: $A\supsetneq \mathrm{pa}^*(j)$.} By Assumption~6,
\[
R^*_{j\mid A}(u)\ge R^*_{j\mid\mathrm{pa}^*(j)}(u)
\]
for sufficiently large $u$.

Thus $\mathrm{pa}^*(j)$ is a minimizer of $R^*_{j\mid A}(u)$. Moreover, any minimizer cannot be in
Case 1, so every minimizer must contain $\mathrm{pa}^*(j)$. Therefore $\mathrm{pa}^*(j)$ is the
unique inclusion-minimal minimizer.
\end{proof}

\begin{proof}[Proof of Corollary~1]
For any skeleton-compatible DAG $G$, Theorem~2 gives nodewise
inequalities
\[
R^*_{j\mid\mathrm{pa}_G(j)}(u)\ge R^*_{j\mid\mathrm{pa}^*(j)}(u),\qquad j=1,\dots,p.
\]
Summing over $j$ yields
\[
R^*(G;u)\ge R^*(\mathcal G^*;u),
\]
so $\mathcal G^*$ is a global minimizer. If $G$ is any global minimizer, equality must hold at every
node, hence each $\mathrm{pa}_G(j)$ is a nodewise minimizer and must contain $\mathrm{pa}^*(j)$ by
Theorem~2. Therefore every global minimizer contains $\mathcal G^*$
edgewise, and $\mathcal G^*$ is the unique inclusion-minimal global minimizer.
\end{proof}

% =============================================================
\section{Uniformly Bounded Approximation Error}
\label{app:proof_prop3}
% =============================================================

\noindent\textbf{Precise regularity conditions for Assumption~5.}
The compact statement of Assumption~5 in the main paper abbreviates the following rate conditions
for the true undirected skeleton, the H\"usler--Reiss limit, and the proxy-adjusted model:
\begin{enumerate}
\item \textbf{(Sparsity and Intermediate Effective Sample Size)} For each node $j$, let
$\mathcal{N}_{j}=\{i:\{i,j\}\in E^{*}_{\mathrm{skel}}\}$ and $s_{j}=|\mathcal{N}_{j}|$ denote its
neighbor set and degree in the skeleton, and let $s_{\max}=\max_{j}s_{j}$. We assume
$s_{\max}=o\!\left(\sqrt{\frac{k}{\log p}}\right)$.

\item \textbf{(Bounded Eigenvalues)} For each node $j$ with true neighbour set $\mathcal{N}_j$, the submatrix
$\Sigma_{\mathcal{N}_j\mathcal{N}_j}$ of the H\"usler--Reiss covariance matrix
$\Sigma=\Theta^{-1}$ satisfies $\Lambda_{\min}(\Sigma_{\mathcal{N}_j\mathcal{N}_j})\ge
C_{\min}$ for some absolute constant $C_{\min}>0$, where $\Lambda_{\min}(\cdot)$ denotes
the smallest eigenvalue.

\item \textbf{(Minimum Signal Strength in the Tail)} The non-zero regression coefficients in the
tail limit are sufficiently strong to be detected at effective sample size $k$. Specifically, for
some sufficiently large constant $C>0$, $\min_{j}\min_{i\in \mathcal{N}_{j}}|\beta_{ji}^{*}| \ge
\frac{C}{\sqrt{C_{\min}}}\sqrt{\frac{s_{\max}\log p}{k}}$.

\item \textbf{(Score Bias Control)} For each nodewise regression after proxy partialling-out, the
population score contribution of the log-sum-exp residual bias satisfies
$\bigl\|\mathbb{E}[X^{\mathrm{res}}\tilde{\Delta}]\bigr\|_\infty \le c_\Delta \lambda$ for some
constant $c_\Delta>0$, where $X^{\mathrm{res}}$ denotes the residualized design and $\tilde{\Delta} =
\Delta_{\mathrm{model}} - \bar{\Delta}\mathbf{1}_k$ is the mean-centered approximation
bias from replacing the max-linear structural equation by a log-sum-exp proxy, bounded
by $\log(|\mathrm{pa}(j)|+1)$ per Proposition~C.1 below.
\end{enumerate}

\begin{proposition}[Uniformly bounded approximation error]
\label{prop:bounded_approx_error}
Let $\mathbf{X} = (X_1,\dots,X_p)$ follow a recursive max-linear SEM under Assumption~3 with
$f_j(\{x_m\})=\max_{m\in\mathrm{pa}(j)}a_{mj}x_m$ (edge weights $a_{mj}>0$) and unit Fr\'echet
innovations. Write $L_j = \log X_j$. The structural equation for node $j$ takes the form
\[
L_j \;=\; \bigvee_{m\in \mathrm{pa}(j)}\bigl(L_m+\log a_{mj}\bigr)\;\vee\;\log\epsilon_j .
\]
Define the Log-Sum-Exp approximation
\[
\tilde L_j \;=\; \log\!\left(\sum_{m\in \mathrm{pa}(j)} a_{mj}\,e^{L_m} + \epsilon_j\right),
\]
and the approximation error $\mathcal{E}_j=\tilde L_j-L_j$. Then, for any observation,
$\mathcal{E}_j$ is deterministically bounded:
\[
0 \le \mathcal{E}_j \le \log\!\big(|\mathrm{pa}(j)| + 1\big).
\]
Consequently, the log-sum-exp proxy is equivalent to the max-linear structural equation up to a
strictly bounded $O(1)$ constant, independent of the tail threshold $u$.
\end{proposition}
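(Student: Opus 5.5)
The proof reduces to the elementary log-sum-exp sandwich: the maximum of finitely many positive numbers lies between their sum and the number of terms times their maximum. I will work with the positive quantities $w_m := a_{mj}e^{L_m} = a_{mj}X_m$ for $m\in\mathrm{pa}(j)$, together with $w_0:=\epsilon_j$. These are strictly positive because $X$ is strictly positive (Assumption~3, (ii), and induction along the DAG), and the edge weights satisfy $a_{mj}>0$. Set $n_j := |\mathrm{pa}(j)|+1$ for the number of terms.

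First, I record that the structural equation gives
\[
X_j=\max_{0\le m\le |\mathrm{pa}(j)|} w_m ,
\]
so $L_j=\log\max_m w_m$. This holds because $\log$ is strictly increasing and commutes with finite maxima, which yields the displayed $\bigvee$ form. By definition, $\tilde L_j=\log\sum_m w_m$.

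Second, since every $w_m>0$, I use the sandwich
\[
\max_m w_m\;\le\;\sum_m w_m\;\le\; n_j\max_m w_m .
\]
Taking logarithms, again monotone, gives
\[
L_j\le \tilde L_j\le L_j+\log n_j ,
\]
that is, $0\le\mathcal E_j\le\log(|\mathrm{pa}(j)|+1)$.

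The bound is deterministic: it holds realisation by realisation and uses no distributional fact about the Fréchet innovations. In particular it does not depend on conditioning on $\{Z\in\mathcal L\}$ or on the threshold $u$. The same bound therefore transfers verbatim to the tail-domain coordinates, since shifting both $L_j$ and $\tilde L_j$ by the same constant does not change $\mathcal E_j$. This gives the "consequently" clause.

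There is no real obstacle. The only point needing care is positivity of all summands, which is what makes the lower bound $\mathcal E_j\ge 0$ and the finiteness of the logarithms valid. When $\mathrm{pa}(j)=\emptyset$ the sum reduces to $\epsilon_j$, so $\mathcal E_j=0=\log 1$, consistent with the stated bound.
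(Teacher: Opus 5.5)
Your argument is correct and is essentially the paper's. The paper factors out the largest term and writes $\mathcal{E}_j=\log\bigl(1+\sum_{m\neq m^*}w_m/w_{m^*}\bigr)$, with $|\mathrm{pa}(j)|$ ratios each in $(0,1]$; that is your sandwich $\max_m w_m\le\sum_m w_m\le n_j\max_m w_m$ divided through by the maximum, and you are more explicit than the paper about the positivity needed for the lower bound.

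One small caution: your remark that the bound transfers verbatim to tail-domain coordinates holds for the common shift $L_j-\log u$. It does not hold literally for the paper's Pareto-standardised $Z_j=\log Y_j-\log u$, whose marginal transforms are node-specific and nonlinear. The statement only needs $u$-independence, which your deterministic argument already gives.
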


\begin{proof}
Let $X_{(1)} = \max_{m \in \mathrm{pa}(j)} (a_{mj}e^{L_m}) \vee \epsilon_j$ denote the maximum term
in the true structural equation. The approximation error induced by the Log-Sum-Exp mapping is
exactly
\[
\mathcal{E}_j = \log \left( \sum_{m \in \mathrm{pa}(j)} a_{mj}e^{L_m} + \epsilon_j \right) - \log
X_{(1)} = \log \left( 1 + \sum_{m \neq m^*} \frac{X_{m}}{X_{(1)}} \right),
\]
where the sum inside the logarithm contains exactly $|\mathrm{pa}(j)|$ non-maximum terms. By the
definition of the maximum, the ratio $X_{m}/X_{(1)} \le 1$ for all terms. Therefore, we
deterministically obtain the uniform bound $0 \le \mathcal{E}_j \le \log(1 + |\mathrm{pa}(j)|)$.
\end{proof}

% =============================================================
\section{Proof of Lemma~1 (Uniform Score Concentration)}
\label{app:proof_uniform_score}
% =============================================================

\begin{proof}[Proof of Lemma~1]
By Theorem~3, condition on the event $E^*_{\mathrm{skel}} \subseteq
\hat{E}_{\mathrm{skel}}$ with $|\hat{E}_{\mathrm{skel}}| \le c_3 p s_{\max}$. For each node $j$, the
parent set $A$ must satisfy $A \subseteq \mathrm{nb}(j;\hat{E}_{\mathrm{skel}})$ and $|A| \le
s_{\max}$. The number of such sets is at most
\[
\sum_{m=0}^{s_{\max}} \binom{|\mathrm{nb}(j;\hat{E}_{\mathrm{skel}})|}{m} \le \sum_{m=0}^{s_{\max}}
p^m \le (s_{\max}+1)\,p^{s_{\max}}.
\]
Union-bounding the assumed per-set concentration over all $A$ for fixed $j$,
\[
\mathbb{P}\Bigl(\sup_{A}\bigl|\hat{R}_{j\mid A}(u)-R^*_{j\mid A}(u)\bigr|>\epsilon\Bigr) \le
c_1\exp\bigl(s_{\max}\log p + \log(s_{\max}+1) - c_2 k\min(\epsilon^2,\epsilon)\bigr).
\]
A further union bound over $p$ nodes gives
\[
\mathbb{P}\Bigl(\max_{1\le j\le p}\sup_{A}\bigl|\hat{R}_{j\mid A}(u)-R^*_{j\mid
A}(u)\bigr|>\epsilon\Bigr) \le c_1\exp\bigl((s_{\max}+1)\log p + \log(s_{\max}+1) - c_2
k\min(\epsilon^2,\epsilon)\bigr).
\]
Under $s_{\max}=o(\sqrt{k/\log p})$ and $\log p=o(k)$, the exponent diverges to $-\infty$ when
$\epsilon = C\sqrt{(s_{\max}+1)\log p/k}$ for sufficiently large $C$, giving the stated rate.
\end{proof}

% =============================================================
\section{Proof of Theorem~3 (Skeleton Consistency)}
\label{app:proof_skeleton_consistency}
% =============================================================

\begin{proof}[Proof of Theorem~3]
Fix a target node $j \in V$ and, to simplify notation, drop the subscript $j$. Because the proxy
enters the nodewise regression as an unpenalized regressor, we first partial out its contribution
and work with the residualized response and residualized design. To distinguish from the
Pareto-standardized variables $Y$ and the exceedance data $\{Z^{(i)}\}$ in
Section~3, we write $\mathbf{y} \in \mathbb{R}^k$ for the resulting response vector
of proxy-adjusted log-exceedances and $\mathbf{X} \in \mathbb{R}^{k \times (p-1)}$ for the
corresponding design matrix of residualized predictors. Let $\beta^*$ be the true sparse regression
coefficient vector associated with the extremal precision matrix. Let $\mathcal{S} =
\mathrm{supp}(\beta^*)$ be the true support set and $|\mathcal{S}| = s$.

Under Assumption~4 and Proposition~\ref{prop:bounded_approx_error}, the proxy-adjusted
tail-domain model can be written as $\mathbf{y} = \mathbf{X}\beta^* + W + \Delta_{\mathrm{model}}$,
where the stochastic noise $W$ captures the H\"usler--Reiss approximation error and
$\Delta_{\mathrm{model}}$ is the log-sum-exp bias, deterministically bounded by
$\log(|\mathrm{pa}(j)|+1)$. By Proposition~\ref{prop:bounded_approx_error}, each component of
$\Delta_{\mathrm{model}}$ is bounded by $\log(|\mathrm{pa}(j)|+1)$. The nodewise Lasso includes an
unpenalized intercept, which absorbs the sample mean of $\Delta_{\mathrm{model}}$ via the
first-order KKT condition. The residual bias $\tilde{\Delta} = \Delta_{\mathrm{model}} -
\bar{\Delta}\,\mathbf{1}_k$ is therefore zero-mean and bounded. Under the extremal precision
characterization, $\mathrm{supp}(\beta^*)$ coincides with the true neighbourhood of node $j$.

The residual bias $\tilde{\Delta}=\Delta_{\mathrm{model}}-\bar{\Delta}\mathbf{1}_k$ is centered, but
its score contribution need not have zero population mean. We therefore write
\[
\frac{1}{k}\mathbf{X}^\top\tilde{\Delta}
=\mathbb{E}[X^{\mathrm{res}}\tilde{\Delta}]+\left(\frac{1}{k}\mathbf{X}^\top\tilde{\Delta}-\mathbb{E}[X^{\mathrm{res}}\tilde{\Delta}]\right).
\]
By condition (4) of Assumption~8,
$\|\mathbb{E}[X^{\mathrm{res}}\tilde{\Delta}]\|_\infty \le c_\Delta\lambda$. For the fluctuation
term, note that in the log-tail domain the coordinates are exponential-type; under
Assumption~4, each coordinate of $X^{\mathrm{res}}$ is therefore sub-exponential
(equivalently, has uniformly bounded $\psi_1$-Orlicz norm). Since $\tilde{\Delta}$ is bounded
(Proposition~\ref{prop:bounded_approx_error}), the products
\[
U_{t\ell}:=X^{\mathrm{res}}_{t\ell}\tilde{\Delta}_t-\mathbb{E}[X^{\mathrm{res}}_{t\ell}\tilde{\Delta}_t]
\]
are centered sub-exponential with $\|U_{t\ell}\|_{\psi_1}\le K_U$ uniformly in $\ell$. Bernstein's
inequality for sub-exponential sums (e.g., \citealp[Corollary~2.8.3]{wainwright2019high}) yields,
for any $x>0$,
\[
\mathbb{P}\!\left(\left|\frac{1}{k}\sum_{t=1}^k U_{t\ell}\right|>x\right)
\le 2\exp\!\left[-c k\min\!\left(\frac{x^2}{K_U^2},\frac{x}{K_U}\right)\right].
\]
Applying a union bound over $\ell=1,\dots,p$ and taking $x=C\sqrt{\log p/k}$, with $\log p=o(k)$ so
that $x\to0$ and the quadratic regime applies for large $k$, we obtain
\[
\left\|\frac{1}{k}\mathbf{X}^\top\tilde{\Delta}-\mathbb{E}[X^{\mathrm{res}}\tilde{\Delta}]\right\|_\infty=O_p\!\left(\sqrt{\frac{\log
p}{k}}\right)=O_p(\lambda).
\]
Hence $\|\frac{1}{k}\mathbf{X}^\top\tilde{\Delta}\|_\infty=O_p(\lambda)$, which is absorbed into the
regularization parameter by enlarging the constant $C$ in the choice of $\lambda$.

\paragraph{Retaining true edges (sure screening).}
We show that for each $i \in \mathcal{S}$ (each true neighbour), $\hat{\beta}_i \neq 0$ with high
probability. By the standard $\ell_1$ error bound for the Lasso under the restricted eigenvalue
condition \citep{buhlmann2011statistics}, Assumption~8(2) gives
\begin{equation}
    \|\hat{\beta} - \beta^*\|_2 \le \frac{\sqrt{s}}{\kappa}\,\lambda
\end{equation}
with probability at least $1 - c_5 \exp(-c_6 k \lambda^2)$, for a constant $\kappa$ depending on
$C_{\min}$ through the restricted eigenvalue condition (Assumption~8(2)). In particular,
\begin{equation}
    \|\hat{\beta}_{\mathcal{S}} - \beta^*_{\mathcal{S}}\|_\infty \le \|\hat{\beta} - \beta^*\|_2 \le \frac{\sqrt{s}}{\kappa}\,\lambda = \mathcal{O}_p\!\left( \frac{1}{\sqrt{C_{\min}}} \sqrt{\frac{s_{\max}\log p}{k}} \right).
\end{equation}
By Assumption~8(3), $\min_{i \in \mathcal{S}} |\beta^*_i| \ge
\frac{C}{\sqrt{C_{\min}}} \sqrt{\frac{s_{\max}\log p}{k}}$ for a sufficiently large constant $C$, so
$|\hat{\beta}_i - \beta^*_i| < |\beta^*_i|$ and hence $\hat{\beta}_i \neq 0$ for each $i \in
\mathcal{S}$. Taking a union bound over the at most $s_{\max}$ true neighbours per node and over all
$p$ nodes, all true edges are retained with probability at least $1 - C_1 p s_{\max} \exp(-C_2 k
\lambda^2)$, which tends to one under the stated scaling regime.

\paragraph{Bounding the number of selected edges.}
The skeleton is constructed from the thresholded support $\hat{\mathcal{S}}^{(\tau)} = \{i :
|\hat{\beta}_i| > \tau\}$ with $\tau = \alpha \lambda$ for some $\alpha \in (0,1)$, so that any
variable with coefficient magnitude below the threshold is excluded. Since $\|\hat{\beta} -
\beta^*\|_2 \ge \sqrt{\sum_{i \in \hat{\mathcal{S}}^{(\tau)} \setminus \mathcal{S}}
|\hat{\beta}_i|^2} \ge \sqrt{|\hat{\mathcal{S}}^{(\tau)} \setminus \mathcal{S}|}\cdot\tau$, the
thresholding condition $|\hat{\beta}_i| > \tau$ for $i \in \hat{\mathcal{S}}^{(\tau)} \setminus
\mathcal{S}$ gives $\sqrt{|\hat{\mathcal{S}}^{(\tau)} \setminus \mathcal{S}|}\cdot\tau <
\frac{\sqrt{s}}{\kappa}\lambda$, hence $|\hat{\mathcal{S}}^{(\tau)}| \le (\frac{1}{\kappa\alpha} +
1) s$. Applying this bound to each of the $p$ nodewise regressions and taking the union yields
$|\hat{E}(\lambda)| \le c_3 \, p \, s_{\max}$ with high probability, for a constant $c_3$.

Combining the two parts, we obtain $\mathbb{P}(E^*_{\mathrm{skel}} \subseteq \hat{E}(\lambda)) \ge 1
- C_1 p s_{\max} \exp(-C_2 k \lambda^2)$ and, on this event, $|\hat{E}(\lambda)| \le c_3 \, p \,
s_{\max}$.
\end{proof}

% =============================================================
\section{Proof of Theorem~4 (Orientation Consistency)}
\label{app:proof_orientation_consistency}
% =============================================================

\begin{proof}[Proof of Theorem~4]
Let $\mathcal{G}_{c_3}$ denote the class of DAGs whose skeleton is a subgraph of some edge set $E$
with $E\supseteq E^{*}_{\mathrm{skel}}$, $|E|\le c_{3}ps_{\max}$, and nodewise in-degree at most
$s_{\max}$. By Theorem~3, on an event with probability tending to one,
\[
E^*_{\mathrm{skel}}\subseteq \hat{E}_{\mathrm{skel}},\qquad |\hat{E}_{\mathrm{skel}}|\le
c_3ps_{\max},
\]
so $\mathcal{G}(\hat{E}_{\mathrm{skel}})\subseteq\mathcal{G}_{c_3}$ and
$G^*\in\mathcal{G}(\hat{E}_{\mathrm{skel}})$.

Fix any $G\neq G^*$ in $\mathcal{G}(\hat{E}_{\mathrm{skel}})$ and define the set of changed nodes
\[
\mathcal{J}(G,G^*) := \bigl\{j:\ \mathrm{pa}_G(j)\neq \mathrm{pa}_{G^*}(j)\bigr\}.
\]
Partition $\mathcal{J}$ into
\begin{align*}
\mathcal{J}_{\mathrm{under}}(G,G^*) &:= \bigl\{j\in\mathcal{J}:
 \mathrm{pa}_G(j)\not\supseteq \mathrm{pa}^*(j)\bigr\}, \\
\mathcal{J}_{\mathrm{over}}(G,G^*) &:= \bigl\{j\in\mathcal{J}:
 \mathrm{pa}_G(j)\supsetneq \mathrm{pa}^*(j)\bigr\},
\end{align*}
noting that $\mathcal{J}=\mathcal{J}_{\mathrm{under}}\cup\mathcal{J}_{\mathrm{over}}$ (disjoint).
Write $A_j=\mathrm{pa}_G(j)$ and $A_j^*=\mathrm{pa}^*(j)$. The empirical score decomposes as
\[
\hat{\mathrm{Score}}_j(A) = \hat{f}_j(A) + \mathrm{pen}(|A|),
\]
where $\hat{f}_j(A)=\frac{k}{2}\log\frac{\mathrm{SAE}_j(A)}{k}$ is the empirical fit and
$\mathrm{pen}(m)=\frac{1}{2}(\log k+2\gamma_{\mathrm{EBIC}}\log p)\,m$ is the EBIC complexity
penalty. The global score difference is
\[
\hat{S}(G)-\hat{S}(G^*)=\sum_{j\in\mathcal{J}}\bigl[\hat{f}_j(A_j)-\hat{f}_j(A_j^*)+\mathrm{pen}(|A_j|)-\mathrm{pen}(|A_j^*|)\bigr].
\]
By Lemma~1,
\[
\eta_k:=\max_{1\le j\le p}\sup_{A}\bigl|\hat{R}_{j\mid A}(u)-R^*_{j\mid
A}(u)\bigr|=O_p\!\left(\sqrt{\frac{(s_{\max}+1)\log p}{k}}\right)=o_p(1).
\]
We show that each summand is strictly positive with probability tending to one.

\medskip
\noindent\textbf{Case~I ($j\in\mathcal{J}_{\mathrm{under}}$): $A_j\not\supseteq A_j^*$.}
By assumption~(i), $R^*_{j\mid A_j}(u)-R^*_{j\mid A_j^*}(u)\ge\Delta_{\mathrm{miss}}$. A mean-value
expansion of $\hat{f}_j(A)=\frac{k}{2}\log\hat{R}_{j\mid A}(u)${ (possible since $\log$ is Lipschitz on $[r_{\min}(u)/2,\,2r_{\max}(u)]$),} gives
\[
\hat{f}_j(A_j)-\hat{f}_j(A_j^*)
= \frac{k}{2\,\bar R_j}\bigl[\hat{R}_{j\mid A_j}(u)-\hat{R}_{j\mid A_j^*}(u)\bigr]
\ge \frac{k}{2\,\bar R_j}\bigl[\Delta_{\mathrm{miss}}-2\eta_k\bigr],
\]
where $\bar R_j$ lies between $\hat R_{j\mid A_j}(u)$ and $\hat R_{j\mid A_j^*}(u)$. On the event
$\eta_k<\min\{r_{\min}(u)/2,\,r_{\max}(u)\}$, all admissible empirical risks lie in
$[r_{\min}(u)/2,\,2r_{\max}(u)]$, hence $\bar R_j\le 2r_{\max}(u)$ and therefore
\[
\hat{f}_j(A_j)-\hat{f}_j(A_j^*)
\ge \frac{k}{4r_{\max}(u)}\bigl[\Delta_{\mathrm{miss}}-2\eta_k\bigr].
\]
The penalty difference satisfies
\[
\bigl|\mathrm{pen}(|A_j|)-\mathrm{pen}(|A_j^*|)\bigr|\le \tfrac{1}{2}(\log
k+2\gamma_{\mathrm{EBIC}}\log p)\,s_{\max}=O(\log k\cdot s_{\max}).
\]
Since $\frac{k\Delta_{\mathrm{miss}}}{4r_{\max}(u)}=\Omega(k)$ dominates $O(\log k\cdot s_{\max})$
for large $k$, the net score contribution is positive with probability tending to one.

\medskip
\noindent\textbf{Case~II ($j\in\mathcal{J}_{\mathrm{over}}$): $A_j\supsetneq A_j^*$.}
By assumption~(ii), $R^*_{j\mid A_j}(u)=R^*_{j\mid A_j^*}(u)$. Let $B_j:=A_j\setminus A_j^*$, so
$|B_j|\ge1$. By Assumption~6, uniformly over such equal-risk supersets,
\[
\hat f_j(A_j^*)-\hat f_j(A_j)=O_p(|B_j|).
\]
The EBIC penalty gap is
\[
\mathrm{pen}(|A_j|)-\mathrm{pen}(|A_j^*|)
=\tfrac{1}{2}(\log k+2\gamma_{\mathrm{EBIC}}\log p)\,|B_j|.
\]
Hence
\[
\hat{\mathrm{Score}}_j(A_j)-\hat{\mathrm{Score}}_j(A_j^*)
\ge -O_p(|B_j|)+\tfrac{1}{2}(\log k+2\gamma_{\mathrm{EBIC}}\log p)\,|B_j|.
\]
Since $\log k+2\gamma_{\mathrm{EBIC}}\log p\to\infty$, the penalty term dominates and the net score
contribution is positive with probability tending to one.

\medskip
\noindent\textbf{Graph-level conclusion.}
Both cases yield strictly positive per-node contributions with probability tending to one. Hence
\[
\hat{S}(G)-\hat{S}(G^*)>0
\]
for every $G\neq G^*$ with probability tending to one. Since
$G^*\in\mathcal{G}(\hat{E}_{\mathrm{skel}})$, any empirical minimizer satisfies $\hat{G}=G^*$ with
probability tending to one.
\end{proof}

% =============================================================
\section{Sensitivity Analysis to the Threshold}
\label{app:threshold_sensitivity}
% =============================================================

We assess how sensitive S3ME is to the choice of the exceedance threshold, which is parametrized as
$q_{\mathrm{conf}} = 1 - n^{\beta}/n$ for varying $\beta$. We fix $n=1000$, $p=50$, $m=1$, and
repeat 50 Monte Carlo replicates for each $\beta \in \{0.5, 0.6, 0.7, 0.8, 0.9\}$, reporting mean
skeleton F1/SHD and DAG F1/SHD. As $\beta$ decreases, the threshold $q_{\mathrm{conf}}$ rises and
the effective tail sample size $k$ shrinks, making skeleton estimation harder, consistent with
Theorem~3 which requires $k$ to be large relative to $\log p$ for reliable sure screening. The
orientation step exploits an asymptotic directional signal whose existence is guaranteed by
Theorem~1 independently of $k$, and is therefore less sensitive to the threshold choice.
Figure~\ref{fig:threshold_sensitivity} confirms both predictions. Skeleton F1 declines from $0.521$
at $\beta=0.9$ to $0.435$ at $\beta=0.5$, and skeleton SHD rises from $84.3$ to $114.5$
correspondingly, while DAG F1 remains stable at $0.869$ and DAG SHD at $11.1$ across
$\beta \in \{0.6, 0.7, 0.8, 0.9\}$, with only mild degradation at $\beta=0.5$ where $k$ is
smallest. This confirms that the orientation stage is robust to the threshold choice, and that the
default $\beta=0.7$ sits comfortably in the stable region.

\begin{figure}[htbp]
  \centering
  \includegraphics[width=\textwidth]{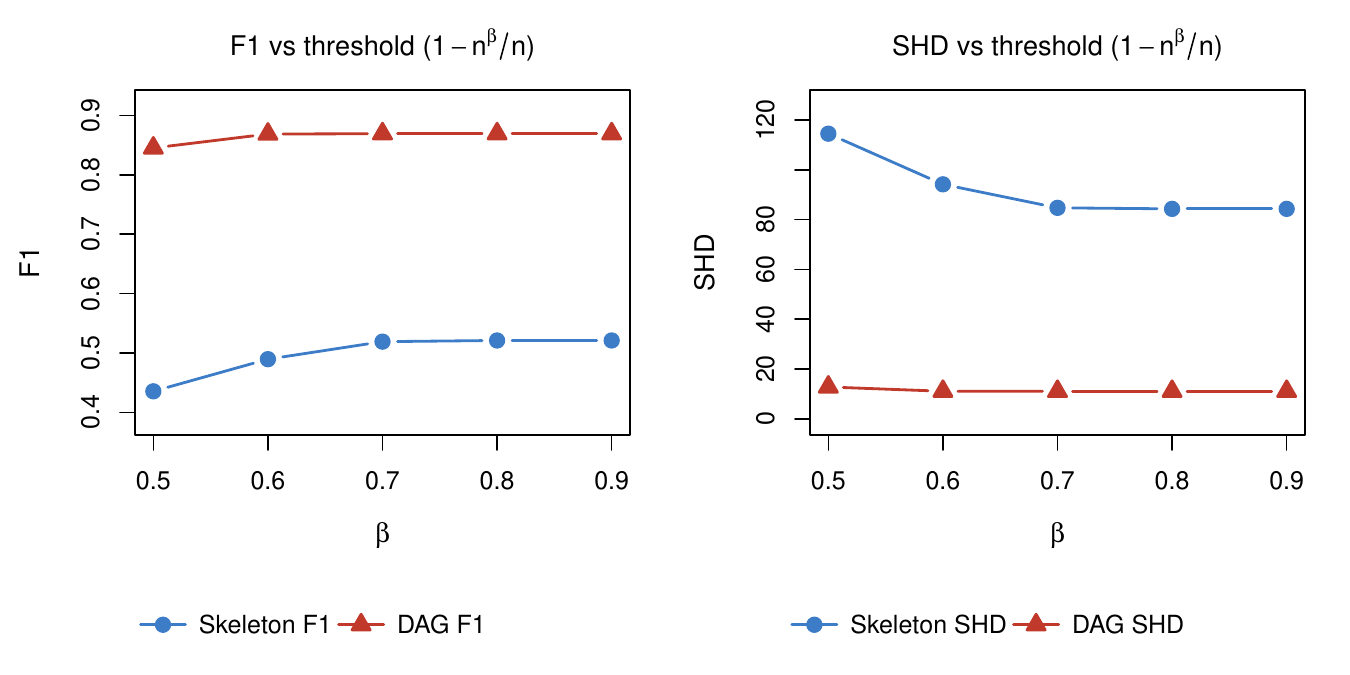}
  \caption{Threshold sensitivity analysis ($n=1000$, $p=50$, 50 replicates).
    \emph{Left:} mean F1 vs.\ $\beta$ for the skeleton step (blue) and the full DAG (red).
    \emph{Right:} mean SHD vs.\ $\beta$ for the skeleton (blue) and the full DAG (red).
    DAG-level metrics are stable across all thresholds; skeleton metrics show mild degradation
    only at large $\beta$ where fewer tail observations are retained.}
  \label{fig:threshold_sensitivity}
\end{figure}

% =============================================================
\section{Robustness to Graph Structure}
\label{app:graph_robustness}
% =============================================================

We examine whether the results from Section~5 depend on the choice of graph generating mechanism.
All parameters are fixed at $n=1000$, $p=50$, $\mathrm{conf\_s}=10$, $C=1.0$,
$\gamma_{\mathrm{EBIC}}=10$, and $50$ Monte Carlo replicates. We compare Barab\'asi--Albert (BA)
graphs with Erd\H{o}s--R\'enyi (ER) graphs \citep{ErdosRenyi1959}, where the ER edge probability is
set to $p_{\mathrm{edge}}=2m/(p-1)$ to align expected degree across graph types. Two attachment
parameters are considered, $m=1$ and $m=2$.

Table~\ref{tab:graph_robustness} reports skeleton and DAG metrics for three additional
configurations, with the $m=1$ BA results from Section~5 serving as the reference. DAG precision
remains near $1.0$ throughout, confirming that the orientation and pruning steps introduce very few
spurious edges regardless of graph structure or attachment parameter. Performance differences are
concentrated in recall. Under ER graphs, the uniform degree distribution allows the lasso to
identify true neighbours more reliably, yielding DAG F1 of $0.979$ and $0.887$ for $m=1$ and $m=2$
respectively. The modest drop from $m=1$ to $m=2$ under ER reflects the expected difficulty of
screening denser graphs. Under BA with $m=2$, however, the same increase in density leads to a much
sharper decline, with DAG F1 falling to $0.644$ and DAG recall to $0.478$. Since ER $m=2$ and BA
$m=2$ have the same expected degree, this gap isolates the effect of the BA hub structure, where
high-degree nodes create local neighbourhoods that are substantially harder to screen, causing true
edges to be missed at the skeleton stage. The deficit then propagates to the DAG level. This confirms
that the performance gap is attributable to graph topology rather than to a limitation of the method
itself.

\begin{table}[htbp]
\centering
\caption{Graph structure robustness ($n=1000$, $p=50$, 50 replicates). Standard deviations in
parentheses. The $m=1$ BA baseline from Section~5 is omitted to avoid repetition.}
\label{tab:graph_robustness}
\begin{tabular}{llcccc}
\toprule
Setting & Stage & Precision & Recall & F1 & SHD \\
\midrule
\multirow{2}{*}{ER, $m=1$} & Skeleton & 0.212 (0.032) & 0.963 (0.031) & 0.346 (0.042) & 182.82 (17.12) \\
                            & DAG      & 1.000 (0.000) & 0.959 (0.032) & 0.979 (0.017) & 2.16 (1.89) \\
\midrule
\multirow{2}{*}{ER, $m=2$} & Skeleton & 0.366 (0.045) & 0.814 (0.062) & 0.502 (0.040) & 159.20 (20.86) \\
                            & DAG      & 1.000 (0.000) & 0.800 (0.063) & 0.887 (0.039) & 20.12 (7.59) \\
\midrule
\multirow{2}{*}{BA, $m=2$} & Skeleton & 0.217 (0.023) & 0.522 (0.055) & 0.306 (0.032) & 229.72 (14.25) \\
                            & DAG      & 0.998 (0.007) & 0.478 (0.056) & 0.644 (0.053) & 50.74 (5.59) \\
\bottomrule
\end{tabular}
\end{table}

% =============================================================
\section{Application Preprocessing and Tuning Details}
\label{app:application_preprocessing}
% =============================================================

\subsection{Danube preprocessing}
For the Danube application, we use daily summer (June--August) discharge measurements from 31
gauging stations over the common period 1960--2010. To reduce short-range temporal dependence, we
apply standard temporal declustering before marginal standardization. Each declustered margin is
then transformed to unit Pareto scale by the empirical rank transform, and the log-tail
coordinates are constructed as in Section~2.3 of the main paper. The application threshold is set
at $q_{\mathrm{conf}}=0.90$, which yields $k=117$ tail observations.

\subsection{S\&P 500 preprocessing and tuning}
For the S\&P 500 application, the raw returns panel is obtained from Stooq over January 2005 to
March 2025. Stocks with more than 5\% missing observations are removed before analysis, leaving
$p=103$ constituents. Loss variables are defined by $L_{t,j}=-r_{t,j}$, after which each marginal
series is transformed to unit Pareto scale by the empirical rank transform and the tail sample is
selected using $q_{\mathrm{conf}} = 1 - n^{0.7}/n$, yielding $k=4{,}120$ observations.

The skeleton is estimated by proxy-adjusted nodewise extremal LASSO with
$\lambda = c\sqrt{\log(p+1)/k}$. Figure~\ref{fig:app2_bic_supp} reports the total BIC score and
resulting DAG sparsity across the tuning grid for the multiplier $c$. The elbow criterion selects
$c=5$, which yields the value used in the main-text application.

\begin{figure}[htbp]
    \centering
    \includegraphics[width=0.95\textwidth]{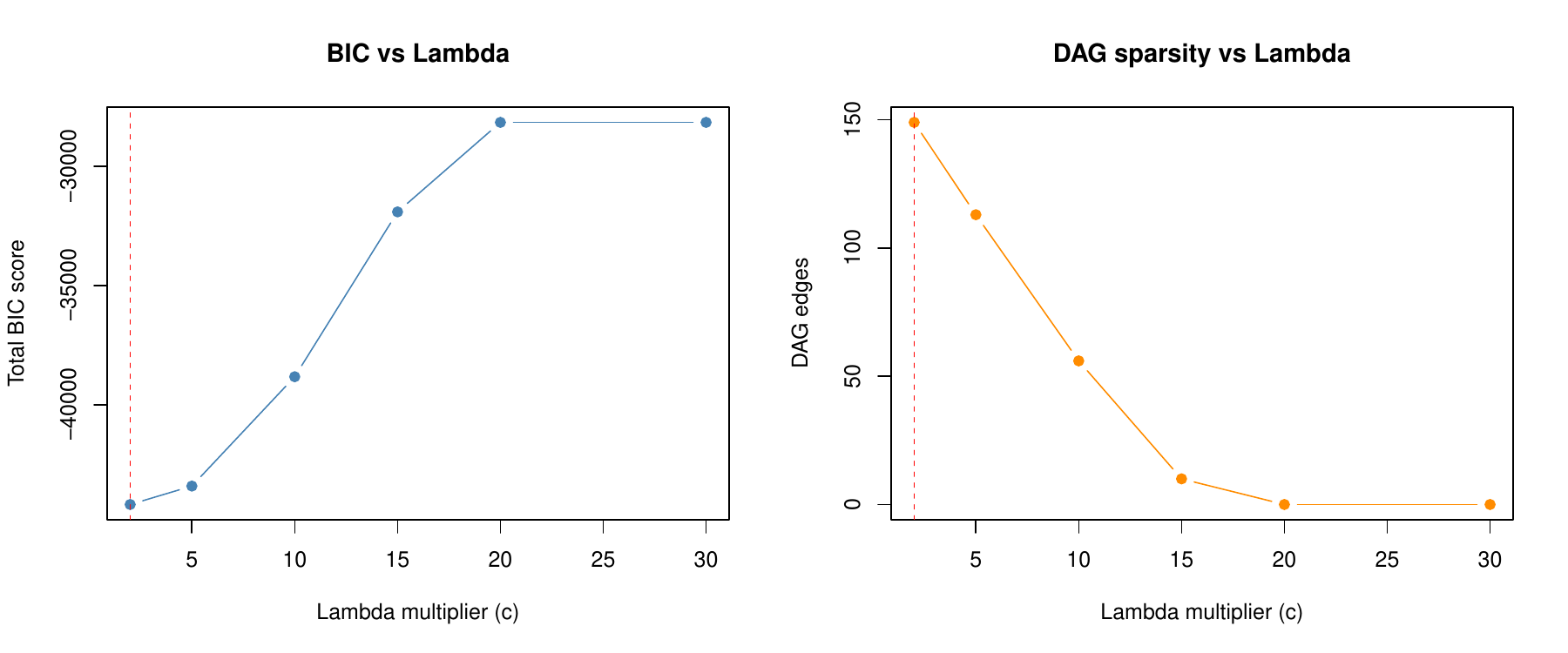}
    \caption{Regularization parameter selection for the S\&P 500 application. Left panel shows the
    total BIC score as a function of the lambda multiplier $c$, and the elbow criterion selects
    $c=5$, corresponding to $\lambda = c\sqrt{\log(p+1)/k}$. Right panel shows the number of DAG
    edges as a function of $c$, illustrating how the estimated graph becomes sparser as the penalty
    increases. The selected $c=5$ is indicated by the red dashed line in both panels.}
    \label{fig:app2_bic_supp}
\end{figure}

\bibliography{reference.bib}

\end{document}